\newcommand{\NC}{\mathrm{NC}}
\newcommand{\BQP}{\mathrm{BQP}}
\newcommand{\PP}{\mathrm{PP}}
\newcommand{\QMA}{\mathrm{QMA}}
\newcommand{\PSPACE}{\mathrm{PSPACE}}
\newcommand{\EXP}{\mathrm{EXP}}
\newcommand{\EXPSPACE}{\mathrm{EXPSPACE}}
\newcommand{\IP}{\mathrm{IP}}
\newcommand{\xorMIP}{{\oplus}\mathrm{MIP}}
\newcommand{\xorMIPstar}{{\oplus}\mathrm{MIP}^*}
\newcommand{\MIPns}{\mathrm{MIP}^{\mathrm{ns}}}
\newcommand{\QIP}{\mathrm{QIP}}
\newcommand{\QQ}{\mathbb{Q}}
\newcommand{\RR}{\mathbb{R}}
\newcommand{\CC}{\mathbb{C}}
\newcommand{\NN}{\mathbb{N}}
\newcommand{\sfS}{\mathsf{S}}
\newcommand{\sfT}{\mathsf{T}}
\newcommand{\sfY}{\mathsf{Y}}
\newcommand{\sfZ}{\mathsf{Z}}
\newcommand{\sfP}{\mathsf{P}}
\newcommand{\rmA}{\mathrm{A}}
\newcommand{\rmB}{\mathrm{B}}
\newcommand{\calH}{\mathcal{H}}
\newcommand{\calX}{\mathcal{X}}
\newcommand{\calY}{\mathcal{Y}}
\newcommand{\poly}{\mathit{poly}}
\newcommand{\trans}{\mathrm{T}}
\newcommand{\problemname}{\textsc}
\newcommand{\yes}{\mathrm{yes}}
\newcommand{\no}{\mathrm{no}}
\DeclareMathSymbol{\mymid}{\mathop}{symbols}{"6A}
\DeclareMathOperator{\Tr}{Tr}
\DeclarePairedDelimiter{\abs}{\lvert}{\rvert}
\DeclarePairedDelimiter{\norm}{\lVert}{\rVert}
\DeclarePairedDelimiter{\floor}{\lfloor}{\rfloor}
\DeclarePairedDelimiter{\bra}{\langle}{\rvert}
\DeclarePairedDelimiter{\ket}{\lvert}{\rangle}
\theoremstyle{plain}
\newtheorem{theorem}{Theorem}
\newtheorem{lemma}[theorem]{Lemma}
\newtheorem{corollary}[theorem]{Corollary}
\theoremstyle{remark}
\newtheorem*{remark}{Remark}
\title{Quantum interactive proofs with weak error bounds}
\author{
  Tsuyoshi Ito\footnotemark[1] \and
  Hirotada Kobayashi\footnotemark[2] \and
  John Watrous\footnotemark[1]}
\date{}
\begin{document}
\maketitle
\renewcommand*{\thefootnote}{\fnsymbol{footnote}}
\footnotetext[1]{Institute for Quantum Computing and School of
  Computer Science, University of Waterloo, Waterloo, Ontario, Canada.}
\footnotetext[2]{Principles of Informatics Research Division, National
  Institute of Informatics, Tokyo, Japan.}
\renewcommand*{\thefootnote}{\arabic{footnote}}

\begin{abstract}
  This paper proves that the computational power of quantum
  interactive proof systems, with a double-exponentially small gap
  in acceptance probability between the completeness and soundness
  cases, is precisely characterized by~$\EXP$, the class of problems
  solvable in exponential time by deterministic Turing machines.
  This fact, and our proof of it, has implications concerning
  quantum and classical interactive proof systems in the setting of
  unbounded error that include the following:
  \begin{itemize}
  \item
    Quantum interactive proof systems are strictly more powerful
    than their classical counterparts in the unbounded-error setting
    unless $\PSPACE=\EXP$, as even unbounded error classical interactive
    proof systems can be simulated in~$\PSPACE$.
  \item
    The recent proof of Jain, Ji, Upadhyay, and Watrous (STOC 2010)
    establishing $\QIP=\PSPACE$ relies heavily on the
    fact that the quantum interactive proof systems defining the class
    $\QIP$ have bounded error.
    Our result implies that some nontrivial assumption on the error
    bounds for quantum interactive proofs is unavoidable to establish
    this result (unless~$\PSPACE=\EXP$).
  \item
    To prove our result, we give a quantum interactive proof system
    for~$\EXP$ with perfect completeness and soundness
    error~$1-2^{-2^\poly}$, for which the soundness error bound is
    provably tight.
    This establishes another respect in which quantum and classical
    interactive proof systems differ, because such a bound cannot hold
    for any classical interactive proof system:
    distinct acceptance probabilities for classical interactive proof
    systems must be separated by a gap that is at least
    (single-)exponentially small.
  \end{itemize}
  We also study the computational power of a few other related
  unbounded-error complexity classes.
\end{abstract}

\section{Introduction}

Interactive proof systems~\cite{Babai85STOC,GolMicRac89SICOMP} are a
central notion in complexity theory.
It is well-known that $\IP$, the class of problems having
single-prover classical interactive proof systems with
polynomially-bounded verifiers, coincides with $\PSPACE$
\cite{Feldman86,LunForKarNis92JACM,Shamir92JACM}, and it was recently
proved that the same characterization holds when the prover and
verifier have quantum computers \cite{JaiJiUpaWat10STOC}.
More succinctly, it holds that
\begin{equation} \label{eq:IP=PSPACE=QIP}
\IP = \PSPACE = \QIP.
\end{equation}
The two equalities in \eqref{eq:IP=PSPACE=QIP} are, in some sense,
intertwined: it is only through the trivial relationship
$\IP\subseteq\QIP$, together with the landmark result
$\PSPACE\subseteq\IP$, that we know $\PSPACE\subseteq\QIP$.
While there exist classical refinements
\cite{Shen92JACM,Meir10-TR10-137} of the original method of Lund,
Fortnow, Karloff, and Nisan \cite{LunForKarNis92JACM} and
Shamir~\cite{Shamir92JACM} used to prove $\PSPACE\subseteq\IP$, there
is no ``short-cut'' known that proves $\PSPACE\subseteq\QIP$ through
the use of quantum computation.

The opposite containments required to prove the two equalities in
the above equation \eqref{eq:IP=PSPACE=QIP} are $\IP\subseteq\PSPACE$
and $\QIP\subseteq\PSPACE$, respectively.
The first containment is usually attributed to Feldman
\cite{Feldman86}, and can fairly be described as being straightforward
to prove.
The standard proof, in fact, gives a polynomial-space algorithm that
computes the optimal acceptance probability for a prover in a
classical interactive proof system \emph{exactly}, with this optimal
probability expressible as some integer divided by $2^k$, where $k$ is
the maximum number of coin-flips used by the verifier.
The proof of the containment $\QIP\subseteq\PSPACE$ given in
\cite{JaiJiUpaWat10STOC}, on the other hand, is more complicated:
it uses known properties of $\QIP$ \cite{KitWat00STOC,MarWat05CC} to
derive a semidefinite programming formulation of it, which is then
approximated in $\PSPACE$ through the use of an algorithm based on the
\emph{matrix multiplicative weights update} method
\cite{AroKal07STOC,WarmuthK06}.
Unlike the standard proof of $\IP\subseteq\PSPACE$, this proof depends
crucially on the bounded-error property of the quantum interactive proof
systems that define $\QIP$.

There must, of course, be alternate ways to prove
$\QIP\subseteq\PSPACE$, and we note that Wu~\cite{Wu10-1004.0264} and
Gutoski and Wu~\cite{GutWu10-1011.2787} have made advances in both
simplifying and extending the proof method of
\cite{JaiJiUpaWat10STOC}.
The main question that motivates the work we present in this paper is
whether the assumption of bounded-error is \emph{required} to prove
$\QIP\subseteq\PSPACE$, or could be bypassed.
Our results demonstrate that indeed \emph{some} assumption on the gap
between completeness and soundness probabilities must be in place to
prove $\QIP\subseteq\PSPACE$ unless $\PSPACE = \EXP$.

To explain our results in greater detail it will be helpful to
introduce the following notation.
Given any choice of functions $m:\NN\rightarrow\NN$ and
$a,b:\NN\rightarrow[0,1]$, where we take $\NN = \{0,1,2,\ldots\}$,
we write $\QIP(m,a,b)$ to denote the class of promise problems%
\footnote{%
  We formulate decision problems as
  \emph{promise problems}~\cite{EveSelYac84IC} because using promise
  problems is more natural than restricting our attention to languages
  in the presence of error bounds.}
$A = (A_{\yes},A_{\no})$ having a quantum interactive proof system%
\footnote{%
  The definitions of quantum computational models based on quantum
  circuits, including quantum interactive proof systems, is
  particularly sensitive to the choice of a gate set in the unbounded
  error setting.
  For our main result we take the standard Toffoli, Hadamard,
  $\pi/2$-phase-shift gate set, but relax this choice for a couple
  of our secondary results.}
with $m(\abs{x})$ messages,
completeness probability at least $a(\abs{x})$ and soundness
error at most $b(\abs{x})$ on all input strings
$x\in A_{\yes}\cup A_{\no}$.
When sets of functions are taken in place of $m$, $a$, or $b$, it is to
be understood that a union is implied.
For example,
\[
  \QIP(\poly,1,1-2^{-\poly}) =
  \bigcup_{m,p\in\poly}\QIP(m,1,1 - 2^{-p}),
\]
where $\poly$ denotes the set of all functions of the
form $p:\NN\rightarrow\NN$ for which there exists a polynomial-time
deterministic Turing machine that outputs $1^{p(n)}$ on input $1^n$
for all $n\in\NN$.
We will also frequently refer to functions of the form
$f\colon\NN\rightarrow [0,1]$ that are polynomial-time computable,
and by this it is meant that a polynomial-time deterministic Turing
machine exists that, on input~$1^n$, outputs a rational number~$f(n)$
in the range~$[0,1]$, represented by a ratio of integers expressed in
binary notation.
Our main result may now be stated more precisely as follows.
\begin{theorem} \label{theorem:qipu}
  It holds that
  \[
  \bigcup_a \QIP(\poly,a,a-2^{-2^{\poly}})
  = \QIP(3,1,1-2^{-2^{\poly}})
  = \EXP,
  \]
  where the union is taken over all polynomial-time computable
  functions~$a\colon\NN\to(0,1]$.
\end{theorem}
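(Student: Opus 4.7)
The plan is to establish the chain
\[
\QIP(3,1,1-2^{-2^{\poly}}) \subseteq \bigcup_a \QIP(\poly,a,a-2^{-2^{\poly}}) \subseteq \EXP \subseteq \QIP(3,1,1-2^{-2^{\poly}}).
\]
The first inclusion is immediate (take $a=1$); the second is an upper bound proved by semidefinite-programming simulation; and the third is a lower bound requiring an explicit protocol construction. Those last two are the substance of the proof.

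For the upper bound I would use the standard semidefinite-programming formulation of the maximum acceptance probability of a quantum interactive proof system, as already alluded to in the introduction. For a $\poly(n)$-round protocol with polynomial-size verifier circuits the SDP has matrices of dimension $2^{\poly(n)}$ with bit-size also $2^{\poly(n)}$. Interior-point or ellipsoid-based SDP algorithms achieve additive precision $\varepsilon$ in time polynomial in the SDP size and in $\log(1/\varepsilon)$; taking $\varepsilon$ a constant factor smaller than $2^{-2^{\poly(n)}}$ yields total running time $\poly(2^{\poly(n)})$, which lies in $\EXP$. Since $a$ is polynomial-time computable, we can evaluate $a(n)$ to doubly-exponential precision within $\EXP$ and compare to the SDP optimum, thereby deciding the promise problem. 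Note that this argument does not give $\PSPACE$ instead of $\EXP$, because the $\log(1/\varepsilon)$ factor alone is already $2^{\poly(n)}$, which is exactly the reason (as stressed in the introduction) that the existing $\QIP\subseteq\PSPACE$ proof depends crucially on bounded error.

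For the lower bound I would take an $\EXP$-complete language --- for concreteness the bounded acceptance problem for deterministic Turing machines running in time $2^{p(n)}$ --- and build a 3-message quantum interactive proof system for it with perfect completeness and soundness error at most $1-2^{-2^{\poly(n)}}$. The template is the $\QIP(3)$ protocol for $\PSPACE$ combined with the ``precise quantum'' trick that already lets 1-message protocols with single-exponentially small gap capture $\PSPACE$: the prover encodes the exponential-length computation in a polynomial-qubit quantum state whose continuous amplitudes can carry doubly-exponentially fine information; the verifier issues a challenge selecting a local consistency check; the prover's final message supplies the information needed for the check. The honest prover achieves acceptance probability exactly~$1$, and any prover strategy passing the checks is forced to commit to an accepting computation trace of length~$2^{\poly(n)}$.

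The principal obstacle is to prove the soundness bound $1-2^{-2^{\poly(n)}}$ on no-instances. This is an SDP analysis: one must show that the best prover strategy against the verifier's circuit achieves acceptance probability at most $1-2^{-2^{\poly(n)}}$ whenever the input is not in the language. A naive tableau-style check produces only a single-exponentially small gap of order $1/T$, so the encoding has to be chosen so that a cheating prover's unavoidable inconsistency is detected at the doubly-exponential scale; this likely requires both a careful encoding of the $\EXP$ computation into a polynomial-qubit state and a matching dual-feasible witness for the verifier's SDP. A secondary technical task is to verify that the Kitaev--Watrous parallelization to exactly three messages preserves the doubly-exponentially small gap rather than collapsing it, since naive parallelization costs polynomial factors in the gap and here the gap is essentially the entire error parameter; one must check that the loss is absorbed harmlessly inside the double exponent.
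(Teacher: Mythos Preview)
Your chain of inclusions and the upper bound via exponential-time SDP solving match the paper exactly, and your remark that the Kitaev--Watrous/KKMV round reduction to three messages costs only a polynomial factor in the gap (hence is absorbed in the double exponent) is also correct and is precisely how the paper handles that step.

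The lower bound, however, has a genuine gap. You yourself identify it: ``a naive tableau-style check produces only a single-exponentially small gap of order~$1/T$,'' and you call overcoming this ``the principal obstacle'' --- but you do not supply the mechanism. Saying that ``continuous amplitudes can carry doubly-exponentially fine information'' does not by itself yield a doubly-exponentially small \emph{soundness} gap; the verifier has only polynomially many qubits to interrogate, and without a concrete test you cannot bound the cheating probability away from~$1$ at the required scale. The precise-quantum phenomena you allude to go the other direction: they let a verifier \emph{accept} with an exponentially small bias, not \emph{detect} cheating with doubly-exponentially small probability. Your sketch therefore stops exactly where the real work begins.

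The paper's route supplies the missing idea and is quite different from a direct $\QIP(3)$ construction. First it builds a \emph{classical two-prover one-round} protocol for the $\EXP$-complete \problemname{Succinct Circuit Value} problem: the verifier sends uniformly random gate indices $s,t$; Alice reports the input values to gate~$g_s$, Bob reports the value of~$g_t$; the verifier checks one local consistency condition. Soundness is analysed against arbitrary \emph{no-signaling} provers by induction over the $N=2^{\poly}$ gates, with $\delta(i)<3^i/3^N$; the exponential number of inductive steps is exactly what manufactures the doubly-exponentially small gap. Second, this no-signaling game is simulated by a \emph{single quantum prover}: the verifier prepares $\ket\Phi_{\sfS\sfS'}\ket\Phi_{\sfT\sfT'}$, sends $\sfS,\sfT$ as ``questions,'' and then either runs the classical predicate on the measured answers or performs an ``undo-Alice''/``undo-Bob'' test asking the prover to disentangle one half, after which the verifier checks that $\sfS\sfS'$ (resp.\ $\sfT\sfT'$) has returned to~$\ket\Phi$. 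Passing the undo tests forces the induced answer distribution to be $O(\sqrt{\varepsilon'})$-no-signaling (via a gentle-measurement bound and Holenstein's rounding lemma), so the no-signaling soundness transfers to the quantum protocol with only polynomial loss in the gap.
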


The only new relation in the statement of Theorem~\ref{theorem:qipu} is
\begin{equation}
  \EXP \subseteq \QIP(\poly,1,1-2^{-2^{\poly}});
  \label{eq:qipu}
\end{equation}
we have expressed the
theorem in the above form only for the sake of clarity.
In particular, the containment
\[
  \QIP(\poly,1,1-2^{-2^{\poly}})
  \subseteq \QIP(3,1,1-2^{-2^{\poly}})
\]
follows from the fact that
\[
  \QIP(m,1,1 - \varepsilon) \subseteq
  \QIP(3,1,1-\varepsilon/(m-1)^2)
\]
for all $m\in\poly$ and any function
$\varepsilon:\NN\rightarrow[0,1]$, as was proved in
\cite{KemKobMatVid09CC} (or an earlier result of \cite{KitWat00STOC}
with a slightly weaker parameter).
The containment
\[
  \QIP(3,1,1-2^{-2^{\poly}}) \subseteq
  \bigcup_a \QIP(\poly,a,a-2^{-2^{\poly}})
\]
is trivial.
The containment
\[
  \bigcup_a \QIP(\poly,a,a-2^{-2^{\poly}}) \subseteq \EXP
\]
follows from the results of Gutoski and Watrous~\cite{GutWat07STOC},
as a semidefinite program representing the optimal acceptance probability
of a given quantum interactive proof system%
\footnote{The results of Gutoski and Watrous~\cite{GutWat07STOC}
  establish an $\EXP$ upper bound even for interactive proof systems
  with two competing quantum provers, and only mild assumptions on the
  gate set are needed to obtain this containment.
  Namely, the containment holds if the gate set consists of finitely
  many gates and the Choi-Jamio\l kowski representation of each gate
  is a matrix made of rational complex numbers.}
can be solved to an
exponential number of bits of accuracy using an exponential-time
algorithm~\cite{Khachiyan79Doklady,GroLovSch88,NesNem94}.

The new containment~(\ref{eq:qipu}), which represents the main
contribution of this paper, is proved in two steps.
The first step constructs a classical two-prover one-round interactive
proof system with one-sided error double-exponentially close to 1 for
the $\EXP$-complete \problemname{Succinct Circuit Value} problem.
It will be proved that when an instance whose answer is ``no'' is
given to this proof system, provers cannot make the verifier accept
with probability more than double-exponentially close to 1 even if
they are allowed to use a \emph{no-signaling strategy}, i.e., a
strategy that cannot be used for communication between them.
The second step converts this classical two-prover one-round
interactive proof system to a single-prover quantum interactive proof
system without ruining its soundness properties.

Theorem~\ref{theorem:qipu} and its proof have the following three
consequences.
\begin{itemize}
\item
  Unbounded-error quantum interactive proof systems are strictly more
  powerful than their classical counterparts unless $\PSPACE=\EXP$, as
  unbounded-error classical interactive proof systems recognize
  exactly $\PSPACE$.
\item
  The dependence on the error bound in the proof
  in~\cite{JaiJiUpaWat10STOC} is not an artifact of the proof
  techniques, but is a necessity unless~$\PSPACE=\EXP$.
  To be more precise, even though a double-exponential gap is sufficient
  to obtain the $\EXP$ upper bound
  by applying a polynomial-time algorithm for semidefinite programming,
  Theorem~\ref{theorem:qipu} implies that a double-exponential gap
  is not sufficient for the $\PSPACE$ upper bound unless~$\PSPACE=\EXP$.
\item
  Our proof of Theorem~\ref{theorem:qipu} shows that a quantum
  interactive proof system can have a complete\-ness-sound\-ness gap
  smaller than singly exponential, which cannot happen in classical
  interactive proof systems.
  In our quantum interactive proof system for~$\EXP$, the gap is
  double-exponentially small, and this is tight in the sense
  that a dishonest prover can make the verifier accept with
  probability double-exponentially close to 1.
\end{itemize}
We do not know if the double-exponentially small gap in
Theorem~\ref{theorem:qipu} can be improved to one that is
single-exponentially small
by constructing a different proof system.

The two parts of the proof of Theorem~\ref{theorem:qipu} mentioned
above are contained in Sections~\ref{sec:no-signaling-EXP} and
\ref{sec:qipu}.
Some additional results concerning unbounded-error quantum interactive
proof systems are discussed in Section~\ref{sec:additional-results}.

\section{\boldmath A no-signaling proof system for $\EXP$ with a weak
  error bound}
\label{sec:no-signaling-EXP}

As discussed in the previous section, our proof of
the containment~(\ref{eq:qipu}) has two parts.
This section discusses the first part, in which we present a classical
two-prover one-round interactive proof system for an $\EXP$-complete
problem.
The proof system will have perfect completeness and a soundness
error double-exponentially close to 1, even when the provers are
permitted to employ an arbitrary \emph{no-signaling strategy}.
No-signaling strategies, which are defined below, have been considered
previously in \cite{Holenstein09TOC} and \cite{ItoKobMat09CCC}, for instance.

\subsection{Definition of no-signaling proof systems}

In a \emph{(classical) two-prover one-round interactive proof system},
a verifier is a randomized polynomial-time process
having access to two provers (which we will call Alice and Bob).
All of the parties are given the same input string~$x$.
The verifier produces polynomial-length questions to Alice and Bob,
receives polynomial-length answers from them, and decides whether to
accept or reject.

A verifier~$V$ naturally defines a family of two-player one-round games
indexed by input strings.
A \emph{(classical) two-player one-round game}~$G=(S,T,Y,Z,\pi,R)$
is determined by finite sets~$S$, $T$, $Y$, and~$Z$,
a probability distribution~$\pi$ over~$S\times T$
and a function~$R\colon S\times T\times Y\times Z\to[0,1]$.
The value~$R(s,t,y,z)$ is written as~$R(y,z \mymid s,t)$ by convention.
This game is interpreted as a cooperative two-player game
of imperfect information played by two \emph{players} (Alice and Bob)
and run by a third party called the \emph{referee}, who enforces the
rules.
First the referee generates a pair of questions~$(s,t)\in S\times T$
according to the probability distribution~$\pi$
and sends~$s$ to Alice and~$t$ to Bob.
Then Alice responds to the referee with an element~$y\in Y$
and Bob responds with~$z\in Z$.
Finally the referee decides whether Alice and Bob win or lose, using
randomness in the most general situation:
Alice and Bob win with probability~$R(y,z\mymid s,t)$ and lose with
probability~$1-R(y,z\mymid s,t)$.
Note that if we fix a verifier and an input string~$x\in\{0,1\}^*$,
the verifier acts as a referee in some two-player one-round game~$G_{V,x}$.

A \emph{strategy} of players in a two-prover one-round
game~$G=(S,T,Y,Z,\pi,R)$ is a family of probability
distributions~$p_{s,t}$ over~$Y\times Z$ indexed
by~$(s,t)\in S\times T$, where the value~$p_{s,t}(y,z)$
represents the probability with which Alice replies with the
string~$y$ and Bob replies with the string~$z$ under the condition
that the verifier sends the question~$s$ to Alice and the question~$t$
to Bob.
It is customary to write~$p(y,z\mymid s,t)$ instead of~$p_{s,t}(y,z)$.
The strategy is said to be \emph{no-signaling}
if the following \emph{no-signaling conditions} are satisfied:
\begin{enumerate}
\item
  No-signaling from Alice to Bob:
  \[
    \sum_{y\in Y} p(y,z\mymid s,t) =
    \sum_{y\in Y} p(y,z\mymid s',t)
  \]
  for all $s,s'\in S$, $t\in T$, and $z\in Z$.
\item
  No-signaling from Bob to Alice:
  \[
    \sum_{z\in Z} p(y,z\mymid s,t) =
    \sum_{z\in Z} p(y,z\mymid s,t')
  \]
  for all $s\in S$, $t,t'\in T$, and $y\in Y$.
\end{enumerate}
For functions $a,b\colon\NN\to[0,1]$,
a two-prover one-round interactive proof system with a verifier~$V$ is
said to \emph{recognize} a promise problem~$A = (A_{\yes},A_{\no})$
with no-signaling provers with completeness probability at least~$a$
and soundness error at most~$b$ if the corresponding games satisfy
the following conditions:
\begin{itemize}
\item
  \emph{Completeness}.
  For every~$x\in A_{\yes}$, there exists a no-signaling strategy for
  the game~$G_{V,x}$ that makes the verifier accept with
  probability at least~$a(\abs{x})$.
\item
  \emph{Soundness}.
  For every~$x\in A_{\no}$, every no-signaling strategy for the
  game~$G_{V,x}$ makes the verifier accept with probability at
  most~$b(\abs{x})$.
\end{itemize}
The class of promise problems~$A$ having such a two-prover one-round
interactive proof system is denoted by~$\MIPns_{a,b}(2,1)$.
It is known that~$\MIPns_{a,b}(2,1)=\PSPACE$
for all polynomial-time computable functions~$a,b:\NN\to(0,1]$ for
which $a(n) - b(n) \geq 1/p(n)$ for some $p\in\poly$
\cite{ItoKobMat09CCC,Ito10ICALP}.

\subsection{\boldmath The proof system for $\EXP$ and its analysis}
  \label{section:two-prover}

This section describes a (classical) two-prover one-round interactive
proof system for $\EXP$ with perfect completeness (for uncorrelated
honest provers) and soundness error double-exponentially close to 1
against arbitrary no-signaling provers.
The proof system has the additional property that the verifier's
questions to the two provers are uniformly generated random strings,
which will be important in the next section.

For a Boolean circuit $C$ with $N$ gates $g_0,g_1,\dots,g_{N-1}$,
where gate~$g_j$ is an input to gate~$g_i$ only if~$j<i$,
a pair $(N,D)$ is called a \emph{succinct representation} of $C$
if $D$ is a Boolean circuit that, given an integer $0\le i\le N-1$,
returns the kind of gate~$g_i$ (ZERO, ONE, AND, OR, or NOT)
and the indices of gates from which the inputs to $g_i$ come (if any).
Note that a succinct representation of length~$n$
represents a Boolean circuit with at most~$2^n$ gates.
The \problemname{Succinct Circuit Value} problem
is the following decision problem.
\begin{center}
  \begin{minipage}{6in}
    \textbf{\problemname{Succinct Circuit Value}}\\[2mm]
    \begin{tabular}{@{}lp{5in}@{}}
    Instance: &
    A succinct representation of a Boolean circuit $C$ with $N$ gates
    whose fan-in is at most two and an integer $0\le k\le N-1$.\\[2mm]
    Question: &
    Does gate~$g_k$ have value~$1$?
    \end{tabular}
  \end{minipage}
\end{center}
The \problemname{Succinct Circuit Value} problem
is $\EXP$-complete (see, e.g., Theorem~3.31 of \cite{DuKo00}).
We will give a two-prover one-round interactive proof system
for \problemname{Succinct Circuit Value}
with the completeness and soundness conditions stated above.

\begin{theorem} \label{theorem:mipnsu}
  The \problemname{Succinct Circuit Value} problem has a two-prover
  one-round interactive proof system with no-signaling provers
  with perfect completeness and soundness error~$1-2^{-2^{p(n)}}$
  for some~$p\in\poly$, i.e.,
  \[
    \problemname{Succinct Circuit Value}
    \in\MIPns_{1,1-2^{-2^{\poly}}}(2,1).
  \]
  Moreover, for some constant~$\alpha>0$ and infinitely many input
  strings~$x$, the soundness error of this proof system is at
  least~$1-2^{-2^{\abs{x}^\alpha}}$.
\end{theorem}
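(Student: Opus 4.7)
The plan is to analyze a standard oracle-style two-prover game for \problemname{Succinct Circuit Value} and extract the quantitative soundness bound from a linear-programming rationality argument. On input $x = (N, D, k)$ of length $n$ (with $N \le 2^n$), the verifier picks a gate index $i$ uniformly from $\{0,\ldots,2^n-1\}$ (treating out-of-range indices as automatic accepts) and, using $D$, recovers the type of $g_i$ and the indices $j_1,j_2$ of its inputs. It sends $i$ to Alice and an independent uniformly random index $\ell$ (drawn from the same index space, possibly augmented to include $k$) to Bob. Alice returns a triple $(v_i,v_{j_1},v_{j_2})\in\{0,1\}^3$ and Bob returns a bit $w$. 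The verifier accepts iff Alice's triple is locally consistent with the gate operation at $g_i$, $w$ agrees with Alice's bit at index $\ell$ whenever $\ell\in\{i,j_1,j_2\}$, and any query involving $k$ returns the value $1$. The verifier's messages are uniform random strings, as required in Section~\ref{sec:qipu}, and perfect completeness is immediate by having both provers answer according to the true circuit evaluation.

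For the soundness upper bound I would argue in two steps. The structural step shows that on a NO instance no no-signaling strategy achieves acceptance probability exactly $1$: by no-signaling from Alice to Bob, Bob's response distribution $Z_\ell$ depends only on $\ell$, and probability-$1$ acceptance then forces each entry of Alice's triple at every $i$ to equal $Z_\ell$ at the corresponding gate index, so that all of Alice's and Bob's responses are captured by a single (random) function $f(\ell):=Z_\ell$ on the gates. The gate-consistency checks then enforce $f(i)=g_i(f(j_1),f(j_2))$ almost surely, and induction from the constant input gates collapses $f$ to the unique true circuit evaluation; in a NO instance this gives $f(k)=0$, contradicting the requirement $w=1$ on question $k$. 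The quantitative step then converts this strict inequality into $1-2^{-2^{\poly(n)}}$ via linear-programming rationality: the optimal no-signaling acceptance probability is the value of a linear program with variables $p(y,z\mymid s,t)$ ranging over the $2^{O(n)}$ question pairs and a constant-size answer alphabet, with all coefficients rational of bit-length $\poly(n)$. By Cramer's rule at an optimal vertex, the LP value is a rational with denominator at most $2^{2^{\poly(n)}}$, so any value strictly below $1$ is at most $1-2^{-2^{\poly(n)}}$.

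For the \emph{moreover} tightness claim I would exhibit an infinite family of NO inputs together with an explicit no-signaling strategy for each, attaining acceptance probability at least $1-2^{-2^{\abs{x}^\alpha}}$. The idea is to construct inputs with enough structural slack---for example by instantiating the succinct representation as a very long chain of gates feeding into $g_k$---so that the LP feasible region admits cheating distributions that deviate from the honest strategy by an amount of order $2^{-2^{\abs{x}^\alpha}}$ while still forcing the claimed value at $g_k$ to be $1$ and satisfying all no-signaling and consistency constraints; a direct calculation then bounds the rejection probability by roughly the size of this deviation. The main obstacle is the first soundness step, since no-signaling strategies are strictly more flexible than shared-randomness ones; one has to combine both no-signaling conditions with the full set of local consistency checks to collapse Alice's a priori question-dependent and randomised triples to a single deterministic function on the gate indices, and this collapse is precisely what drives the LP value below $1$.
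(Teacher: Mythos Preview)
Your soundness upper bound takes a different route from the paper's, and it works. The paper argues quantitatively from the start: letting $\delta(i)$ be the probability that Bob reports the wrong value on question~$i$, it proves $\delta(i)<3^i/3^N$ by induction on~$i$, propagating the error bound from each gate's inputs to its output; this directly yields soundness error at most $1-1/(N^2\cdot 3^N)$. You instead separate the argument into a structural step (acceptance probability strictly below~$1$ on NO instances, via the collapse of Bob's marginals to the deterministic circuit evaluation) and a rationality step (vertex solutions of the exponential-size no-signaling LP have denominators at most $2^{2^{\poly(n)}}$). Both are valid; the paper's bound is explicit and slightly sharper, while yours is a clean, reusable template that applies to any such protocol once the ``value $<1$'' step is established. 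One caution on your structural step: the phrase ``a single random function $f$'' is loose for general no-signaling strategies, since Alice's joint $(v_{j_1},v_{j_2})$ on question~$i$ need not match any global coupling of the $Z_\ell$'s. What actually drives your induction is that once $Z_{j_1}$ and $Z_{j_2}$ are point masses, Alice's marginals force $v_{j_1},v_{j_2}$ to be deterministic, and then the local gate check pins down $v_i$ and hence $Z_i$; you should phrase it that way.

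The ``moreover'' part is where your proposal has a real gap. You describe the cheating strategy as one that ``deviates from the honest strategy by an amount of order $2^{-2^{|x|^\alpha}}$'', but on a NO instance the honest strategy is rejected with certainty at the output check, so a tiny perturbation of it cannot achieve near-$1$ acceptance; something qualitatively different is needed. The paper's construction is explicit: on a depth-$h$ chain of duplicated OR gates rooted at two ZERO gates (the Trevisan--Xhafa instance), Bob claims value~$1$ at level~$i$ with probability $1/2^{h-i}$, and Alice's answers are correlated with Bob's so that every gate-consistency and cross-check passes; the only failure mode is Bob answering~$1$ on a level-$0$ ZERO gate, which happens with probability $O(2^{-h})$. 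The substantive work is writing down joint distributions for each question pair and verifying that both no-signaling conditions hold, and your sketch does not yet indicate how you would produce such a strategy.
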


\paragraph{Idea.}
The idea for the protocol is simple.
The honest provers hold the correct values of all gates in a circuit.
These values have to satisfy exponentially many local constraints,
and the verifier checks one of these local constraints chosen randomly.
It turns out that the local constraints, together with the no-signaling conditions,
are sufficient to restrict the value of each gate claimed by the provers
to the correct value inductively, beginning from the constant gates
and propagating from the inputs and the output of each gate,
concluding the soundness.

\paragraph{Protocol.}
Without loss of generality we assume that $N$ is a power of two
by adding unused gates as necessary.
The verifier chooses two integers $0\le s,t\le N-1$ uniformly and
independently.
He sends $s$ to Alice and $t$ to Bob.
Alice answers all the values of the input gates of $g_s$
in the same order as $D$ returns (if any).
Bob answers the value of $g_t$.
The verifier checks the following conditions.
\begin{enumerate}[(a)]
\item \label{enum:test-kind}
  If $s=t$, then Bob's answer must be equal to
  the value computed from Alice's answers (if any) and the kind of
  gate~$g_s$.
\item \label{enum:test-agree}
  If $g_t$ is an input to gate~$g_s$,
  then the value of $g_t$ claimed by Alice
  must agree with the value claimed by Bob.
\item \label{enum:test-output}
  If $t=k$, then Bob's answer must be $1$.
\end{enumerate}
The verifier accepts if and only if all the conditions~(a)--(c) are
satisfied.

\paragraph{Completeness.}
Completeness is easy: if the value of gate~$g_k$ is $1$, then provers
who simply answer the requested values of gates are accepted with
probability~$1$.

\paragraph{Soundness.}
Now we shall prove that this two-prover interactive proof system
has soundness error at most~$1-2^{-O(N)}=1-2^{-O(2^n)}$
against no-signaling dishonest provers.
Again we can assume that $N$ is a power of two without loss of generality.

Let $(N,D,k)$ be an instance of \problemname{Succinct Circuit Value},
and let~$v_i\in\{0,1\}$ be the value of gate~$g_i$
for~$0\le i\le N-1$.
Fix any no-signaling strategy in the two-prover interactive proof
system, and
let~$\varepsilon$ be the probability that this strategy is rejected.
We assume $\varepsilon<1/(N^2\cdot3^N)$
and prove that gate~$g_k$ has value~$1$.

Let~$\varepsilon(s,t)$ be the probability that this strategy
is rejected, conditioned on pair~$(s,t)$ of questions.
Then
\[
  \varepsilon=\frac{1}{N^2}\sum_{s,t}\varepsilon(s,t),
\]
which implies for any questions $s,t$,
it holds that
\[
  \varepsilon(s,t)\le\sum_{s',t'}\varepsilon(s',t')
  =N^2\varepsilon<\frac{1}{3^N}.
\]
Let~$\delta(i)$ be the probability
that Bob answers~$1-v_i$ when asked~$i$.

We prove that
\begin{equation}
  \delta(i)<\frac{3^i}{3^N}
  \label{eq:mipns-soundness}
\end{equation}
by induction on $i$.

First we consider the case where~$g_i$ is a constant gate.
This includes the case of~$i=0$.
As Bob gives a wrong answer with probability~$\delta(i)$
when Bob's question is~$i$, regardless of Alice's question,
$\delta(i)\le\varepsilon(i,i)$ by considering the probability
that the strategy fails in the test~(\ref{enum:test-kind}),
which implies
\[
  \delta(i)\le\varepsilon(i,i)<\frac{1}{3^N}\le\frac{3^i}{3^N}.
\]

Suppose $i\ge1$ and~$g_i$ is not a constant gate.
Assume $g_i$ is an AND or OR gate,
and let $j_1$ and $j_2$ be the indices of the inputs to $g_i$.
First consider Alice's answer in the case where her question is $i$.
If the value of $g_{j_1}$ claimed by Alice when her question is $i$ is wrong,
then when Bob's question is $j_1$,
either Bob's answer is wrong or Alice's and Bob's answers disagree.
If their answers disagree, then the verifier rejects by the test~(\ref{enum:test-agree}),
and therefore this happens with probability at most $\varepsilon(j_1,j_1)<1/3^N$.
As Bob's answer is wrong with probability~$\delta(j_1)$ and their
answers disagree with probability less than $1/3^N$, the value
of~$g_{j_1}$ claimed by Alice when her question is~$i$ is wrong with
probability at most
\[
  \delta(j_1)+\frac{1}{3^N}<\frac{3^{j_1}+1}{3^N}.
\]
In the same way,
the value of $g_{j_2}$ claimed by Alice when her question is $i$ is wrong
with probability at most
\[
  \delta(j_2)+\frac{1}{3^N}<\frac{3^{j_2}+1}{3^N}.
\]
If Bob's answer for $i$ is wrong,
then if both questions are $i$,
at least one of the following happens:
\begin{itemize}
\item
  The value of $g_{j_1}$ claimed by Alice is wrong.
  This happens with probability less than
  $(3^{j_1}+1)/3^N$.
\item
  The value of $g_{j_2}$ claimed by Alice is wrong.
  This happens with probability less than
  $(3^{j_2}+1)/3^N$.
\item
  The values of $g_{j_1}$ and $g_{j_2}$ claimed by Alice are correct,
  but the value of $g_i$ claimed by Bob is wrong.
  As this is detected by the test~(\ref{enum:test-kind}) of the
  verifier, it happens with probability at most $\varepsilon(i,i)<1/3^N$.
\end{itemize}
Therefore,
\[
  \delta(i)<\frac{3^{j_1}+1}{3^N}+\frac{3^{j_2}+1}{3^N}+\frac{1}{3^N}
  <\frac{3^i}{3^N}.
\]
The case where $g_i$ is a NOT gate is proved in a similar way.
This finishes the inductive case
and establishes the inequality~(\ref{eq:mipns-soundness})
for all~$i$.

The inequality~(\ref{eq:mipns-soundness})
implies that Bob's answer to question~$k$ is equal to~$v_k$
with probability greater than $1-3^k/3^N\ge2/3$.
On the other hand, by the test~(\ref{enum:test-output}),
Bob's answer to question~$k$ is equal to~$1$
with probability at least $1-\varepsilon(k,k)>1-1/3^N\ge2/3$.
These two conditions imply $v_k=1$.

\begin{remark}
  For a function~$a\colon\NN\to(0,1]$,
  let~$\MIPns_{a,<a}(2,1)$ denote the class of promise problems
  having a two-prover one-round interactive proof system
  with no-signaling provers
  with acceptance probability at least~$a$
  and soundness error strictly less than~$a$.
  Because the maximum acceptance probability for no-signaling provers
  can be computed exactly by solving an exponential-size linear
  program~\cite{Preda}, we have~$\MIPns_{a,<a}(2,1)\subseteq\EXP$
  for any polynomial-time computable function~$a\colon\NN\to(0,1]$
  by using any polynomial-time algorithm for linear programming~%
  \cite{Khachiyan79Doklady,Karmarkar84Comb}.
  Combined with Theorem~\ref{theorem:mipnsu},
  we have~$\MIPns_{a,<a}(2,1)=\EXP$ for any such~$a$.
\end{remark}

\paragraph{Tightness of soundness analysis.}
We shall prove the ``moreover'' part of Theorem~\ref{theorem:mipnsu}:
the double-exponential gap is tight for this protocol.
This will be used in the next section to prove
that the soundness error of the quantum interactive proof system
for~$\EXP$ that we construct is at least $1-2^{-2^{\poly}}$ on
infinitely many input strings.

This can be proved by studying the instance of the
\problemname{Succinct Circuit Value} problem used by Trevisan and
Xhafa~\cite{TreXha98PPL}.%
\footnote{%
  Note that we cannot avoid a large soundness error simply by
  restricting the problem to succinct Boolean formula values:
  with this restriction in place, the problem is in $\PSPACE$
  \cite{Lynch77JACM}.}
Let~$h$ be a positive integer.
Consider a circuit~$C$ with $N=2h+2$ gates
$g_0,g_1,\dots,g_{2h+1}$,
where $g_0$ and $g_1$ are ZERO gates
and, for $1\le i\le h$, $g_{2i}$ and $g_{2i+1}$ are two identical OR gates
whose inputs come from $g_{2(i-1)}$ and $g_{2(i-1)+1}$.
Clearly this circuit $C$ has a succinct representation
of length polylogarithmic in~$h$.
Let $k=2h-1$.

Alice and Bob decide their answers as follows.
First we describe each prover's marginal probability distribution.
When Bob is asked either $2i$ or $2i+1$ where $0\le i\le h$,
he answers $1$ with probability $1/2^{h-i}$
and $0$ with probability $1-1/2^{h-i}$.
When Alice is asked $2i$ or $2i+1$ where $1\le i\le h$,
she answers $(1,0)$ and $(0,1)$ each with probability $1/2^{h-i+1}$,
and $(0,0)$ with probability $1-2^{h-i}$.
The joint distribution of their answers is defined as follows.
In what follows, $(y_1,y_2;z)$ denotes that
Alice's answer is $(y_1,y_2)$ and Bob's answer is $z$.
\begin{itemize}
\item
  $s=t$, $\floor{s/2}=i\ge1$:
  Alice and Bob answer
  $(1,0;1)$ and $(0,1;1)$ each with probability $1/2^{h-i+1}$,
  and $(0,0;0)$ with probability $1-1/2^{h-i}$.
\item
  $\floor{s/2}=i\ge1$, $t=2(i-1)$:
  Alice and Bob answer
  $(1,0;1)$ and $(0,1;0)$ each with probability $1/2^{h-i+1}$,
  and $(0,0;0)$ with probability $1-1/2^{h-i}$.
\item
  $\floor{s/2}=i\ge1$, $t=2(i-1)+1$:
  Alice and Bob answer
  $(1,0;0)$ and $(0,1;1)$ each with probability $1/2^{h-i+1}$,
  and $(0,0;0)$ with probability $1-1/2^{h-i}$.
\item
  Otherwise:
  Alice and Bob give their answers in any way
  as long as the marginal distributions agree with the description above
  (e.g.\ they answer independently).
\end{itemize}

It is easy to check that this strategy is no-signaling.

With this strategy, the verifier accepts
unless $t\in\{0,1\}$ and Bob answers $1$
(which fails in test~(\ref{enum:test-kind})).
Therefore, the verifier accepts with probability at least
$1-1/((h+1)\cdot2^h)\ge1-2^{-h}=1-2^{-2^{n^\alpha}}$
for some constant~$\alpha>0$.

\section{Simulating no-signaling provers with quantum interactive proofs}
\label{sec:qipu}

In this section we present the second part of the proof of
the containment~(\ref{eq:qipu}), which is a simulation of the two-prover
one-round interactive proof system described in the previous section
by a quantum interactive proof system with perfect completeness and
unbounded soundness error.
The result in this section can be stated as the following lemma.

\begin{lemma} \label{lemma:mipns-qip}
  Let~$\varepsilon\colon\NN\to(0,1)$.
  Suppose that a promise problem~$A=(A_{\yes},A_{\no})$ has a
  two-prover one-round interactive proof system with no-signaling
  provers with perfect completeness and soundness error at
  most~$1-\varepsilon$.
  Assume moreover that, for each input~$x\in A_{\yes}\cup A_{\no}$,
  the verifier's questions are chosen uniformly at random from the set
  $\{0,1\}^{k(\abs{x})}\times\{0,1\}^{k(\abs{x})}$, for some
  function~$k\in\poly$.
  \begin{enumerate}[(i)]
  \item \label{enum:mipns-qip-a}
    It holds that $A\in\QIP(4,1,1-\varepsilon^2/144)$,
    that is, the problem $A$ has a four-message quantum interactive
    proof system with perfect completeness and soundness error at
    most~$1-\varepsilon^2/144$.%
    \footnote{%
      It is possible to replace the coeffieicnt~$1/144$ with a larger constant
      at the expense of introducing slight complications in several parts in the proof,
      but we will choose to use simpler arguments
      rather than trying to maximize the coefficient.}
  \item \label{enum:mipns-qip-b}
    If the original system has soundness error~$1-\varepsilon'$ on
    input~$x\in A_{\no}$, then the derived quantum interactive proof
    system has soundness error at least~$1-\varepsilon'/4$ on
    input~$x$.
  \end{enumerate}
\end{lemma}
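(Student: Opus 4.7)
The plan is to construct a four-message quantum protocol that emulates the no-signaling two-prover verifier and then to show that any cheating quantum prover induces an approximately no-signaling strategy for the underlying game. The key design challenge is that in a sequential protocol $V\to P\to V\to P$ a single prover can freely condition its second-round response on the first-round question, so the protocol must penalize such ``signaling.'' My intended construction uses a hidden ordering bit $b\in\{0,1\}$: the verifier samples $s,t\in\{0,1\}^{k}$ uniformly and $b$ uniformly, and in messages~$1$ and~$3$ sends the two questions in the order determined by $b$ ($b=0$ meaning Bob first, $b=1$ meaning Alice first); the prover answers in messages~$2$ and~$4$; and the verifier accepts iff the classical predicate $R(y,z\mymid s,t)$ of the underlying game holds on the reassembled transcript. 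Because $s$ and $t$ are uniform, the distribution of the first question the prover sees is uniform regardless of $b$, so the prover's strategy must cope with both orderings simultaneously.

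Completeness is immediate: the perfect-completeness classical local strategy supplied by the no-signaling game is a valid strategy in either ordering of the quantum protocol, so the honest quantum prover achieves acceptance probability~$1$. For soundness, given any quantum prover strategy with acceptance probability $p$, I would extract a no-signaling distribution $q(y,z\mymid s,t)$ by symmetrizing the two distributions $P_0$ and $P_1$ induced under the two orderings. Each $P_b$ enjoys one no-signaling condition ``for free,'' since the prover's first answer is produced without knowledge of the second question. A key technical step, expected to use Fuchs--van de Graaf type inequalities, is to show that if $p$ is close to $1$ then $P_0$ and $P_1$ must be close in total variation---otherwise the prover would fail noticeably in at least one ordering---so that the symmetrized distribution is genuinely no-signaling and is accepted with probability at least $1-12\sqrt{1-p}$. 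Combined with the hypothesis that every no-signaling strategy is accepted with probability at most $1-\varepsilon$, this yields $p\le 1-\varepsilon^2/144$, establishing part~(\ref{enum:mipns-qip-a}).

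For part~(\ref{enum:mipns-qip-b}), I would go in the reverse direction: given a no-signaling distribution $q(y,z\mymid s,t)$ with acceptance probability $1-\varepsilon'$, construct a quantum prover for the four-message protocol whose acceptance probability is at least $1-\varepsilon'/4$. The quantum prover prepares an appropriate auxiliary state (e.g.\ a purification of the marginals of $q$), learns $b$ implicitly from the first question, and samples from the appropriate conditional of $q$ upon receiving the second question; the factor-$4$ loss absorbs the averaging over the ordering bit together with the conditioning overhead of simulating the joint distribution one side at a time. The hardest step will be the quantitative soundness analysis---pinning down the right notion of ``approximate no-signaling'' for the induced quantum distribution, bounding its deviation by $O(\sqrt{1-p})$, and carrying the constants through cleanly enough to reach the stated coefficient $1/144$. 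This is where operator inequalities and careful amplitude-versus-probability bookkeeping will have to be executed precisely.
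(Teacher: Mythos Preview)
Your proposed protocol is entirely classical: the verifier sends classical strings, receives classical answers, and applies a deterministic predicate. Any such four-message protocol is a classical interactive proof system, whose maximum acceptance probability can be computed exactly in $\PSPACE$; so if your protocol had the claimed soundness it would place the \problemname{Succinct Circuit Value} problem---and hence all of $\EXP$---in $\PSPACE$, contradicting the very separation the paper is proving under $\PSPACE\neq\EXP$. The concrete gap is in your soundness step: you assert that if acceptance is close to~$1$ then $P_0$ and $P_1$ must be close in total variation ``otherwise the prover would fail noticeably in at least one ordering,'' but the verifier only checks the predicate $R$, so high acceptance in each ordering says only that each $P_b$ individually has high game value, not that $P_0\approx P_1$. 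Each $P_b$ satisfies just one no-signaling direction, and nothing in your protocol links the two; the symmetrized strategy $(P_0+P_1)/2$ need not be even approximately no-signaling. (A smaller issue: your completeness argument invokes a ``classical local strategy,'' but the lemma only hypothesizes a no-signaling honest strategy, which need not be local; with a hidden ordering bit the prover does not know whether to sample its first answer from $p^{\rmA}(\cdot\mid q_1)$ or $p^{\rmB}(\cdot\mid q_1)$.)

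The paper's construction avoids all of this by a genuinely quantum mechanism. The verifier keeps halves $\sfS',\sfT'$ of maximally entangled pairs and sends the other halves $\sfS,\sfT$ to the prover together with answer registers $\sfY,\sfZ$; after the prover acts, the verifier either measures $\sfS',\sfT',\sfY,\sfZ$ and runs the classical predicate (the \emph{simulation test}) or asks the prover to \emph{undo} its action on one side and checks that the corresponding pair has been restored to $\ket{\Phi}$ (the \emph{undo-Alice} and \emph{undo-Bob} tests). Passing the undo-Bob test forces, via a gentle-measurement estimate (Lemma~\ref{lemma:pure-overlap}), the induced distribution to be approximately no-signaling in one direction, and the undo-Alice test handles the other; Holenstein's rounding lemma then converts this into a genuinely no-signaling strategy with value at least $1-12\sqrt{\varepsilon'}$. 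Completeness holds because the no-signaling conditions are exactly what is needed to build the required undo unitaries coherently---this is where the superposition over questions is essential, and where any purely classical sequential simulation necessarily breaks down.
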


Note that the containment~(\ref{eq:qipu}) follows by applying
Lemma~\ref{lemma:mipns-qip} to the two-prover one-round interactive
proof system for the \problemname{Succinct Circuit Value} problem
with no-signaling provers with perfect completeness and soundness
error at most~$1-2^{-2^{\poly}}$ constructed in the previous section.

\paragraph{Construction of the protocol.}
Given an input string~$x\in A_{\yes}\cup A_{\no}$, the verifier in the
quantum interactive proof system that we construct acts as follows.
First, the verifier prepares six quantum registers
$\sfS$, $\sfT$, $\sfS'$, $\sfT'$, $\sfY$, and $\sfZ$
in the state $\ket\Phi_{\sfS\sfS'}\ket\Phi_{\sfT\sfT'}\ket0_\sfY\ket0_\sfZ$,
where $\ket\Phi$ is the following maximally entangled state:
\[
  \ket\Phi=\left(\frac{\ket{00}+\ket{11}}{\sqrt2}\right)^{\otimes k},
\]
where~$k=k(\abs{x})$.
The four registers $\sfS$, $\sfT$, $\sfS'$, and $\sfT'$
are $k$ qubits long, and $\sfY$ and $\sfZ$ must be long enough to hold
Alice and Bob's answers in the two-prover one-round protocol.
Next, in the first round,
the verifier sends $\sfS$, $\sfT$, $\sfY$, and $\sfZ$
to the prover and the prover sends back the same registers.
Then, the verifier performs one of the following three tests
each with probability $1/4$, and accepts unconditionally with
probability $1/4$.
\begin{itemize}
\item
  \emph{Simulation test}:
  The verifier measures $\sfS'$, $\sfT'$, $\sfY$, and $\sfZ$ in the
  computational basis to obtain $s$, $t$, $y$, and $z$, respectively.
  If the result is accepted by the base two-prover protocol, then the
  verifier accepts; otherwise he rejects.
\item
  \emph{Undo-Alice test}:
  The verifier tells the prover that the undo-Alice test is to be
  performed.
  He then sends registers~$\sfS$ and $\sfY$ back to the prover, and
  receives $\sfS$.
  The verifier then destructively tests whether registers $\sfS$ and
  $\sfS'$ are in state $\ket\Phi$ or not.
  If they are, then he accepts; otherwise he rejects.
\item
  \emph{Undo-Bob test}:
  The verifier tells the prover that the undo-Bob test is to be
  performed.
  He then sends registers~$\sfT$ and $\sfZ$ back to the prover,
  and receives $\sfT$.
  The verifier then destructively tests whether registers $\sfT$ and
  $\sfT'$ are in state $\ket\Phi$ or not.
  If they are, then he accepts; otherwise he rejects.
\end{itemize}
Note that this verifier can be implemented exactly with the standard
Toffoli, Hadamard, $\pi/2$-phase-shift gate set.

\paragraph{Proof of completeness and part~(\ref{enum:mipns-qip-b}) of
  the lemma.}
Let~$x\in A_{\yes}\cup A_{\no}$.
We prove that if there exists a no-signaling strategy in the base
two-prover interactive proof system that makes the verifier accept
with probability~$1-\varepsilon'$, then the quantum interactive proof
system admits a strategy that makes the verifier accept with
probability~$1-\varepsilon'/4$.

Let~$p$ be the no-signaling strategy in the base two-prover
interactive proof system whose acceptance probability
is~$1-\varepsilon'$.
Let
\[
  p^\rmA(y\mymid s)=\sum_{z\in Z}p(y,z\mymid s,t), \qquad
  p^\rmB(z\mymid t)=\sum_{y\in Y}p(y,z\mymid s,t)
\]
be the marginal strategies,
which are well-defined because of the no-signaling conditions.
The prover in the constructed quantum interactive proof system
performs the following.
Registers~$\tilde{\sfS}$, $\tilde{\sfT}$, $\tilde{\sfY}$,
and~$\tilde{\sfZ}$ are the prover's private registers initialized
to~$\ket0$.
\begin{itemize}
\item
  In the first round, he performs the following operation on
  registers~$\tilde{\sfS}$, $\tilde{\sfT}$, $\sfY$, $\tilde{\sfY}$,
  $\sfZ$, and $\tilde{\sfZ}$ controlled on registers~$\sfS$ and~$\sfT$
  being in the state~$\ket{s}_{\sfS}\ket{t}_{\sfT}$:
  \[
    \ket0_{\tilde{\sfS}\tilde{\sfT}\sfY\tilde{\sfY}\sfZ\tilde{\sfZ}}
    \mapsto
    \ket{s}_{\tilde{\sfS}}\ket{t}_{\tilde{\sfT}}\sum_{y,z}\sqrt{p(y,z\mymid
      s,t)}\,\ket{yy}_{\sfY\tilde{\sfY}}\ket{zz}_{\sfZ\tilde{\sfZ}}.
  \]
  This controlled operation changes the global state as follows:
  \begin{align*}
    &
    \frac{1}{2^k}\sum_{s,t}\ket{ss0}_{\sfS\sfS'\tilde{\sfS}}\ket{tt0}_{\sfT\sfT'\tilde{\sfT}}\ket{00}_{\sfY\tilde{\sfY}}\ket{00}_{\sfZ\tilde{\sfZ}} \\
    &\mapsto
    \frac{1}{2^k}\sum_{s,t}\ket{sss}_{\sfS\sfS'\tilde{\sfS}}\ket{ttt}_{\sfT\sfT'\tilde{\sfT}}\sum_{y,z}\sqrt{p(y,z\mymid s,t)}\,\ket{yy}_{\sfY\tilde{\sfY}}\ket{zz}_{\sfZ\tilde{\sfZ}}.
  \end{align*}
\item
  In the undo-Alice test,
  he performs the following operation
  on registers~$\tilde{\sfS}$, $\sfY$, and~$\tilde{\sfY}$
  controlled on registers~$\sfS$, $\tilde{\sfT}$, and~$\tilde{\sfZ}$
  being in the state~$\ket{s}_{\sfS}\ket{t}_{\tilde{\sfT}}\ket{z}_{\tilde{\sfZ}}$:
  \[
    \ket{s}_{\tilde{\sfS}}\sum_y\sqrt{\frac{p(y,z\mymid s,t)}{p^\rmB(z\mymid t)}}\,\ket{yy}_{\sfY\tilde{\sfY}}
    \mapsto
    \ket{0}_{\tilde{\sfS}}\ket{00}_{\sfY\tilde{\sfY}},
  \]
  or does nothing if~$p^\rmB(z\mymid t)=0$.
  This controlled operation changes the global state to
  \[
    \frac{1}{2^k}\sum_{s,t}\ket{ss0}_{\sfS\sfS'\tilde{\sfS}}\ket{ttt}_{\sfT\sfT'\tilde{\sfT}}\ket{00}_{\sfY\tilde{\sfY}}\sum_z\sqrt{p^\rmB(z\mymid t)}\,\ket{zz}_{\sfZ\tilde{\sfZ}},
  \]
  which can be rewritten as
  \[
    \ket\Phi_{\sfS\sfS'}\ket0_{\tilde{\sfS}}\ket{00}_{\sfY\tilde{\sfY}}\otimes\frac{1}{\sqrt{2^k}}\sum_t\ket{ttt}_{\sfT\sfT'\tilde{\sfT}}\sum_z\sqrt{p^\rmB(z\mymid t)}\,\ket{zz}_{\sfZ\tilde{\sfZ}}
  \]
  by rearranging the registers.
\item
  In the undo-Bob test,
  he performs the following operation
  on registers~$\tilde{\sfT}$, $\sfZ$, and~$\tilde{\sfZ}$
  controlled on registers~$\tilde{\sfS}$, $\sfT$, and~$\tilde{\sfY}$
  being in the state~$\ket{s}_{\tilde{\sfS}}\ket{t}_{\sfT}\ket{y}_{\tilde{\sfY}}$:
  \[
    \ket{t}_{\tilde{\sfT}}\sum_z\sqrt{\frac{p(y,z\mymid s,t)}{p^\rmA(y\mymid s)}}\,\ket{zz}_{\sfZ\tilde{\sfZ}}
    \mapsto
    \ket{0}_{\tilde{\sfT}}\ket{00}_{\sfZ\tilde{\sfZ}},
  \]
  or does nothing if~$p^\rmA(y\mymid s)=0$.
  This controlled operation changes the global state to
  \[
    \frac{1}{2^k}\sum_{s,t}\ket{sss}_{\sfS\sfS'\tilde{\sfS}}\ket{tt0}_{\sfT\sfT'\tilde{\sfT}}\ket{00}_{\sfZ\tilde{\sfZ}}\sum_y\sqrt{p^\rmA(y\mymid s)}\,\ket{yy}_{\sfY\tilde{\sfY}},
  \]
  which can be rewritten as
  \[
    \ket\Phi_{\sfT\sfT'}\ket0_{\tilde{\sfT}}\ket{00}_{\sfZ\tilde{\sfZ}}\otimes\frac{1}{\sqrt{2^k}}\sum_s\ket{sss}_{\sfS\sfS'\tilde{\sfS}}\sum_y\sqrt{p^\rmA(y\mymid s)}\,\ket{yy}_{\sfY\tilde{\sfY}}.
  \]
\end{itemize}
This strategy passes the undo-Alice and undo-Bob tests with certainty,
and passes the simulation test with probability~$1-\varepsilon'$,
resulting in the overall acceptance probability~$1-\varepsilon'/4$.

In particular, this implies that this quantum interactive proof system
has perfect completeness
and the statement in part~(ii) of Lemma~\ref{lemma:mipns-qip}.

\paragraph{Proof of soundness.}
We prove the contrapositive: if there is a strategy in the
single-prover protocol that is accepted with high probability, then
the input must be a yes-instance.
Fix an instance~$x\in A_{\yes}\cup A_{\no}$ and a strategy in the
single-prover protocol that is accepted with probability~$1-\varepsilon'$,
where~$\varepsilon'<\varepsilon(\abs{x})^2/144$.
We prove that there is a no-signaling strategy for Alice and Bob
in the base two-prover protocol that is accepted with probability
at least~$1-12\sqrt{\varepsilon'}>1-\varepsilon(\abs{x})$,
implying that~$x\in A_{\yes}$.
As the verifier accepts with probability~$1-\varepsilon'$,
the simulation test, the undo-Alice test, and the undo-Bob test
each succeed with probability at least~$1-4\varepsilon'$.

Let register $\sfP$ denote the prover's private space.
Without loss of generality,
we assume that $\sfP$ is first initialized to $\ket0$
and that the prover performs
a unitary operation $U=U_{\sfS\sfT\sfY\sfZ\sfP}$
in the first round,
a unitary operation $V=V_{\sfS\sfY\sfP}$
in the second round in the undo-Alice test,
and a unitary operation $W=W_{\sfT\sfZ\sfP}$
in the second round in the undo-Bob test.
Let~$\ket\Psi$ be the state in registers~%
$\sfS$, $\sfT$, $\sfS'$, $\sfT'$, $\sfY$, $\sfZ$, and $\sfP$
after the first round:
\[
  \ket\Psi=(I_{\sfS'\sfT'}\otimes U_{\sfS\sfT\sfY\sfZ\sfP})
    \ket\Phi_{\sfS\sfS'}\ket\Phi_{\sfT\sfT'}\ket0_\sfY\ket0_\sfZ\ket0_\sfP.
\]

Let $\tilde{p}(s,t,y,z)$ be the probability with which the results of
the measurement in the simulation test are $s$, $t$, $y$, and $z$:
\[
  \tilde{p}(s,t,y,z)
  =\bra{s}_{\sfS'}\bra{t}_{\sfT'}\bra{y}_\sfY\bra{z}_\sfZ
   (\Tr_{\sfS\sfT\sfP}\ket\Psi\bra\Psi)
   \ket{s}_{\sfS'}\ket{t}_{\sfT'}\ket{y}_\sfY\ket{z}_\sfZ.
\]
Note that because the verifier never sends $\sfS'$ or $\sfT'$ to the prover,
the reduced state $\Tr_{\sfS\sfT\sfY\sfZ\sfP}\ket\Psi\bra\Psi$ is not affected
by the operation by the prover in the first round.
Therefore, $\Tr_{\sfS\sfT\sfY\sfZ\sfP}\ket\Psi\bra\Psi$
is the completely mixed state $I/2^{2k}$ on $\sfS'$ and $\sfT'$.
This implies $\sum_{y,z}\tilde{p}(s,t,y,z)=1/2^{2k}$ for every $s$ and $t$.
Let
\[
  p(y,z\mymid s,t)=2^{2k}\tilde{p}(s,t,y,z).
\]
We shall show that strategy~$p$ is ``close'' to some no-signaling strategy.
For this purpose, we use the notion of~$\delta$-no-signaling strategies.

A strategy~$p$ in a two-player one-round game is said to be \emph{$\delta$-no-signaling}
with respect to probability distribution~$\pi$ over the questions
if there exist single-prover strategies~$p^\rmA(y\mymid s)$ and $p^\rmB(z\mymid t)$
such that
\begin{align}
  \sum_{s,t} \pi(s,t) \frac12 \sum_y
  \abs*{\sum_z p(y,z\mymid s,t)-p^\rmA(y\mymid s)} &\le \delta,
    \label{eq:almost-nosig-a} \\
  \sum_{s,t} \pi(s,t) \frac12 \sum_z
  \abs*{\sum_y p(y,z\mymid s,t)-p^\rmB(z\mymid t)} &\le \delta.
    \label{eq:almost-nosig-b}
\end{align}
We will now prove that $p$ is $4\sqrt{\varepsilon'}$-no-signaling
with respect to the uniform distribution over the questions.
Toward this goal, we define
\[
  p^\rmA(y\mymid s)
  =
  \frac{1}{2^k} \sum_{t,z} p(y,z\mymid s,t),
  \qquad
  p^\rmB(z\mymid t)
  =
  \frac{1}{2^k} \sum_{s,y} p(y,z\mymid s,t),
\]
and prove the inequalities~(\ref{eq:almost-nosig-a})
and~(\ref{eq:almost-nosig-b})
with~$\delta=4\sqrt{\varepsilon'}$.

Let~$\rho$ be the state of registers~$\sfS'$, $\sfT$, $\sfT'$, and~$\sfY$
after the verifier receives a message from the prover
in the undo-Bob test:
\[
  \rho
  =
  \Tr_{\sfS\sfZ\sfP}
  (I_{\sfS\sfS'\sfT'\sfY} \otimes W_{\sfT\sfZ\sfP})
  \ket\Psi
  \bra\Psi
  (I_{\sfS\sfS'\sfT'\sfY} \otimes W_{\sfT\sfZ\sfP}^*).
\]
The fact that this strategy passes the undo-Bob test with probability at
least~$1-4\varepsilon'$ can be written as
\[
  1-
  \bra\Phi_{\sfT\sfT'}
  (\Tr_{\sfS'\sfY}\rho)
  \ket\Phi_{\sfT\sfT'}
  \le
  4\varepsilon'.
\]

We use the following easy lemma,
which will be proved at the end of this section.
In what follows, $\norm{X}_1$ denotes the trace norm of a matrix~$X$:
$\norm{X}_1=\Tr\sqrt{X^*X}$.

\begin{lemma} \label{lemma:pure-overlap}
  Let~$\calX$ and~$\calY$ be finite-dimensional Hilbert spaces.
  Then, for a pure state~$\ket\varphi\in\calX$
  and a density matrix~$\rho$ on~$\calX\otimes\calY$,
  it holds that
  \[
    \norm[\big]{\rho - \ket\varphi\bra\varphi \otimes \Tr_{\calX}\rho}_1
    \le
    4\sqrt{1-\bra\varphi(\Tr_{\calY}\rho)\ket\varphi}.
  \]
\end{lemma}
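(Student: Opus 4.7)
The plan is to route the bound through the subnormalized operator $P\rho P$, where $P = \ket\varphi\bra\varphi \otimes I_\calY$ is the projector onto the subspace in which the $\calX$-register holds $\ket\varphi$. Write $\sigma := \Tr_\calX \rho$ and $F := \bra\varphi(\Tr_\calY\rho)\ket\varphi$, and observe that also $F = \Tr[(\ket\varphi\bra\varphi \otimes I_\calY)\rho] = \Tr(P\rho)$. The triangle inequality then reduces the task to bounding
\[
  \norm{\rho - P\rho P}_1
  \quad\text{and}\quad
  \norm{P\rho P - \ket\varphi\bra\varphi \otimes \sigma}_1
\]
separately.

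For the first quantity I would invoke the gentle measurement lemma: for any $0 \le \Lambda \le I$ and density operator $\tau$ with $\Tr(\Lambda\tau) \ge 1 - \varepsilon$, one has $\norm{\tau - \sqrt{\Lambda}\,\tau\sqrt{\Lambda}}_1 \le 2\sqrt{\varepsilon}$. Applying it with $\Lambda = P$ (a projector, so $\sqrt{\Lambda} = P$) and $\tau = \rho$ yields $\norm{\rho - P\rho P}_1 \le 2\sqrt{1-F}$. For the second quantity, since $P$ is a rank-one projector on $\calX$ tensored with the identity on $\calY$, a direct computation gives $P\rho P = \ket\varphi\bra\varphi \otimes Q$, where $Q := (\bra\varphi \otimes I_\calY)\,\rho\,(\ket\varphi \otimes I_\calY)$ is a positive operator on $\calY$ of trace $F$. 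Extending $\ket\varphi$ to an orthonormal basis $\{\ket{\varphi_i}\}$ of $\calX$ with $\ket{\varphi_1} = \ket\varphi$ and expanding $\sigma = \sum_i (\bra{\varphi_i}\otimes I_\calY)\,\rho\,(\ket{\varphi_i}\otimes I_\calY)$ exhibits $\sigma - Q$ as a sum of positive terms, hence $\sigma - Q \ge 0$; therefore $\norm{P\rho P - \ket\varphi\bra\varphi \otimes \sigma}_1 = \norm{Q - \sigma}_1 = \Tr(\sigma - Q) = 1 - F$.

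Combining the two estimates and using $1 - F \le \sqrt{1-F}$ (valid since $F \in [0,1]$) yields
\[
  \norm{\rho - \ket\varphi\bra\varphi \otimes \sigma}_1 \le 2\sqrt{1-F} + (1-F) \le 3\sqrt{1-F} \le 4\sqrt{1-F},
\]
establishing the claim. The only nontrivial ingredient is the gentle measurement lemma, which is a standard tool admitting a short self-contained proof (for example via Fuchs--van de Graaf applied to $\rho$ and $P\rho P/\Tr(P\rho)$); the rest is routine linear algebra exploiting the tensor-product structure of $P$. Consequently there is no substantial obstacle—the main point is simply to recognize that the hypothesis controls $\Tr(P\rho)$ exactly and to use $P\rho P$ as the intermediate object.
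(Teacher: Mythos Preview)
Your proof is correct and follows essentially the same route as the paper: both introduce the projector $P=\ket\varphi\bra\varphi\otimes I_{\calY}$, apply the gentle measurement lemma to bound $\norm{\rho-P\rho P}_1$, and then control $\norm{P\rho P-\ket\varphi\bra\varphi\otimes\Tr_{\calX}\rho}_1$ via the triangle inequality. The only minor difference is that the paper bounds $\norm{\sigma-Q}_1$ by the monotonicity of the trace norm under partial trace (giving $2\sqrt{1-F}$), whereas you compute it exactly as $1-F$ using $\sigma-Q\ge 0$, which incidentally yields the slightly sharper constant~$3$.
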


By Lemma~\ref{lemma:pure-overlap}, we have that
\[
  \norm[\big]{
    \rho
    -
    \ket\Phi\bra\Phi_{\sfT\sfT'}
    \otimes
    \Tr_{\sfT\sfT'}\rho
  }_1
  \le
  4\sqrt{4\varepsilon'}
  =
  8\sqrt{\varepsilon'}.
\]
Take the partial trace over~$\sfT$
and note that~$\Tr_{\sfT}\rho=\Tr_{\sfS\sfT\sfZ\sfP}\ket\Psi\bra\Psi$
to obtain that
\[
  \norm*{
    \Tr_{\sfS\sfT\sfZ\sfP}\ket\Psi\bra\Psi
    -
    \frac{I_{\sfT'}}{2^k}\otimes \Tr_{\sfS\sfT\sfT'\sfZ\sfP}\ket\Psi\bra\Psi
  }_1
  \le
  8\sqrt{\varepsilon'}.
\]
Note that
\begin{align*}
  \sum_z p(y,z\mymid s,t)
  &=
  2^{2k}
  \bra{s}_{\sfS'}\bra{t}_{\sfT'}\bra{y}_\sfY
  (\Tr_{\sfS\sfT\sfZ\sfP}\ket\Psi\bra\Psi)
  \ket{s}_{\sfS'}\ket{t}_{\sfT'}\ket{y}_\sfY,
  \\
  p^\rmA(y\mymid s)
  &=
  2^k
  \bra{s}_{\sfS'}\bra{y}_{\sfY}
  (\Tr_{\sfS\sfT\sfT'\sfZ\sfP}\ket\Psi\bra\Psi)
  \ket{s}_{\sfS'}\ket{y}_{\sfY}.
\end{align*}
Then,
\begin{align*}
  &
  \frac{1}{2^{2k}}\sum_{s,t} \frac12 \sum_y \abs*{\sum_z p(y,z\mymid s,t)-p^\rmA(y\mymid s)}
  \\
  &=
  \frac12\sum_{s,t,y}\abs*{
    \bra{s}_{\sfS'}\bra{t}_{\sfT'}\bra{y}_{\sfY}
    \left(
      \Tr_{\sfS\sfT\sfZ\sfP}\ket\Psi\bra\Psi
      -
      \frac{I_{\sfT'}}{2^k}\otimes \Tr_{\sfS\sfT\sfT'\sfZ\sfP}\ket\Psi\bra\Psi
    \right)
    \ket{s}_{\sfS'}\ket{t}_{\sfT'}\ket{y}_{\sfY}
  }
  \\
  &\le
  \frac12\norm*{
    \Tr_{\sfS\sfT\sfZ\sfP}\ket\Psi\bra\Psi
    -
    \frac{I_{\sfT'}}{2^k}\otimes \Tr_{\sfS\sfT\sfT'\sfZ\sfP}\ket\Psi\bra\Psi
  }_1
  \\
  &\le
  4\sqrt{\varepsilon'},
\end{align*}
and therefore the inequality~(\ref{eq:almost-nosig-a}) is satisfied.
The proof of  the inequality~(\ref{eq:almost-nosig-b}) is analogous.
This establishes the claim that strategy~$p$
is~$4\sqrt{\varepsilon'}$-no-signaling.

Now we prove that a $\delta$-no-signaling strategy is close to some
no-signaling strategy.
We use a property of the no-signaling conditions shown by
Holenstein~\cite{Holenstein09TOC}.
By applying Lemma~9.4 in~\cite{Holenstein09TOC} twice,
we obtain the following.

\begin{lemma} \label{lemma:almost-nosig-nearly-nosig}
  Let~$p$ be a~$\delta$-no-signaling strategy
  with respect to a probability distribution~$\pi$.
  Then there exists a no-signaling strategy~$\hat{p}$ such that
  \[
    \sum_{s,t} \pi(s,t) \frac12 \sum_{y,z}
    \abs{p(y,z\mymid s,t)-\hat{p}(y,z\mymid s,t)} \le 2\delta.
  \]
\end{lemma}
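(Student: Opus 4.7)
The plan is to invoke Lemma~9.4 of~\cite{Holenstein09TOC} twice, once per direction of the no-signaling condition. Holenstein's lemma asserts that, given a strategy $p$ and a target single-prover distribution $p^\rmA(y\mymid s)$ that is $\delta$-close to $p$'s Alice-marginal in $\pi$-averaged total variation, one can construct a strategy $p^*$ within $\pi$-averaged total variation distance $\delta$ of $p$ whose Alice-marginal is exactly $p^\rmA(y\mymid s)$—i.e., one that is exactly no-signaling from Bob to Alice. Moreover, the construction is designed so that the correction rearranges probability mass only along the $y$-coordinate (for each fixed $s,t,z$), and therefore preserves the Bob-marginal of $p$ exactly.

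First I would apply the lemma to $p$ using hypothesis~(\ref{eq:almost-nosig-a}), producing an intermediate strategy $p'$ that is exactly no-signaling from Bob to Alice and that lies within $\pi$-averaged variation distance $\delta$ of $p$. Because the Bob-marginal of $p'$ coincides with that of $p$, hypothesis~(\ref{eq:almost-nosig-b}) transfers verbatim to $p'$ with the same constant $\delta$. I would then apply the lemma in the symmetric orientation to $p'$, producing a strategy $\hat p$ that is exactly no-signaling from Alice to Bob and within $\pi$-averaged variation distance $\delta$ of $p'$; by the same marginal-preservation property, this second application leaves the Alice-marginal of $p'$ untouched, so $\hat p$ is still exactly no-signaling from Bob to Alice. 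Hence $\hat p$ is a genuine no-signaling strategy, and the triangle inequality yields $\pi$-averaged total variation distance at most $2\delta$ between $p$ and $\hat p$, establishing Lemma~\ref{lemma:almost-nosig-nearly-nosig}.

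The main obstacle is the marginal-preservation feature of Holenstein's construction, which is what allows the two corrections to be chained with additive cost $\delta + \delta = 2\delta$ rather than degrading to $\delta + 2\delta$ via a naive application of the triangle inequality to the second hypothesis. Without preservation, the Bob-marginal of $p'$ would be guaranteed only to be $2\delta$-close to $p^\rmB(z\mymid t)$, and Step~2 alone would cost $2\delta$. The key structural observation is therefore that Holenstein's Lemma~9.4 rearranges mass along one prover's answer coordinate while leaving the integrated distribution over the other prover's answer coordinate intact, and it is precisely this symmetric compatibility between the two directions that yields the clean $2\delta$ bound.
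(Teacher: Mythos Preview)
Your proposal is correct and matches the paper's approach exactly: the paper states that the lemma follows ``by applying Lemma~9.4 in~\cite{Holenstein09TOC} twice'' and that the proof ``is the same as that of Lemma~9.5 in~\cite{Holenstein09TOC}, and is omitted.'' You have correctly identified and articulated the key structural point---that Holenstein's construction preserves the other prover's marginal, allowing the two corrections to be chained with additive cost $\delta+\delta$ rather than $\delta+2\delta$---which is precisely the content of Holenstein's Lemma~9.5.
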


\noindent
The proof of Lemma~\ref{lemma:almost-nosig-nearly-nosig}
is the same as that of Lemma~9.5 in~\cite{Holenstein09TOC}, and is omitted.

By Lemma~\ref{lemma:almost-nosig-nearly-nosig},
there exists a no-signaling strategy~$\hat{p}$
such that
\[
  \frac{1}{2^{2k}}\sum_{s,t} \frac12 \sum_{y,z}
  \abs{p(y,z\mymid s,t)-\hat{p}(y,z\mymid s,t)} \le 8\sqrt{\varepsilon'}.
\]
As the simulation test succeeds with probability at
least~$1-4\varepsilon'$, the no-signaling strategy~$\hat{p}$ makes the
verifier in the base two-prover protocol accept with probability at
least
\[
  1-4\varepsilon'-8\sqrt{\varepsilon'}
  \ge
  1-12\sqrt{\varepsilon'}
  >
  1-\varepsilon(\abs{x}).
\]
By the soundness of the base two-prover protocol, it must hold
that~$x\in A_{\yes}$.
Therefore, the quantum interactive proof has soundness error at
most~$1-\varepsilon^2/144$.

In the rest of the section,
we will prove Lemma~\ref{lemma:pure-overlap}.
We use the following variant
of Winter's gentle measurement lemma~\cite{Winter99IEEEIT},
proved by Ogawa and Nagaoka~\cite{OgaNag07IEEEIT}.

\begin{lemma}
    \label{lemma:gentle-measurement}
  Let~$\calH$ be a finite-dimensional Hilbert space.
  For a density matrix~$\rho$ on~$\calH$
  and a Hermitian matrix~$A$ on~$\calH$
  such that both~$A$ and~$I_{\calH}-A$ are positive semidefinite,
  it holds that
  \[
    \norm{\rho-\sqrt{A}\,\rho\sqrt{A}}_1
    \le
    2\sqrt{\Tr\rho(I_{\calH}-A)}.
  \]
\end{lemma}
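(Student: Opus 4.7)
The plan is to reduce the trace-norm bound to the scalar~$\Tr\rho(I-A)$ via a telescoping identity, followed by two applications of the Cauchy--Schwarz inequality $\norm{XY}_1\le\norm{X}_2\norm{Y}_2$ for the trace norm, and a single operator inequality $(I-\sqrt{A})^2\le I-A$. First I would rewrite the error operator as
\[
  \rho-\sqrt{A}\,\rho\sqrt{A}
  = (I-\sqrt{A})\rho + \sqrt{A}\,\rho(I-\sqrt{A}),
\]
so that the triangle inequality reduces the problem to bounding the trace norms of two products. Inserting $\rho=\sqrt\rho\cdot\sqrt\rho$ and applying Cauchy--Schwarz to each term yields
\[
  \norm{(I-\sqrt{A})\rho}_1 \le \sqrt{\Tr\rho(I-\sqrt{A})^2},
  \qquad
  \norm{\sqrt{A}\,\rho(I-\sqrt{A})}_1 \le \sqrt{\Tr(A\rho)}\,\sqrt{\Tr\rho(I-\sqrt{A})^2}.
\]
Since $\Tr(A\rho)\le\Tr\rho=1$, the two estimates together bound $\norm{\rho-\sqrt{A}\,\rho\sqrt{A}}_1$ by $2\sqrt{\Tr\rho(I-\sqrt{A})^2}$.

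The remaining ingredient is the operator inequality $(I-\sqrt{A})^2\le I-A$ valid on $0\le A\le I$. I would prove it from the algebraic identity $(I-\sqrt{A})^2-(I-A) = 2\sqrt{A}(\sqrt{A}-I)$: since $\sqrt{A}$ and $\sqrt{A}-I$ are both polynomials in $\sqrt{A}$ they commute, so in a common eigenbasis of $A$ the right-hand side has eigenvalues $2\sqrt\lambda(\sqrt\lambda-1)\le 0$ for $\lambda\in[0,1]$ and is therefore negative semidefinite. Substituting $(I-\sqrt{A})^2\le I-A$ into the earlier estimate gives $\norm{\rho-\sqrt{A}\,\rho\sqrt{A}}_1\le 2\sqrt{\Tr\rho(I-A)}$, as required.

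I do not anticipate a genuine obstacle here. The only real design choice is the asymmetric split $(I-\sqrt{A})\rho + \sqrt{A}\,\rho(I-\sqrt{A})$, rather than, say, $(I-A)\rho$ plus a commutator remainder: this particular split is what arranges for the ``mixed'' piece $\sqrt{A}\,\rho(I-\sqrt{A})$ to pick up a harmless factor of $\sqrt{\Tr(A\rho)}\le 1$ from Cauchy--Schwarz, so that both pieces are controlled by the same quantity $\Tr\rho(I-A)$ and combine to give the clean factor of~$2$.
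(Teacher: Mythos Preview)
Your argument is correct: the telescoping split, the two Cauchy--Schwarz (H\"older with $p=q=2$) estimates, the bound $\Tr(A\rho)\le 1$, and the operator inequality $(I-\sqrt{A})^2\le I-A$ all go through exactly as you describe. Note that the paper does not actually supply a proof of this lemma; it merely quotes the statement and attributes it to Ogawa and Nagaoka~\cite{OgaNag07IEEEIT} as a variant of Winter's gentle measurement lemma, so there is no in-paper argument to compare against. Your write-up is essentially the standard short proof of this form of the gentle measurement lemma and would serve as a self-contained replacement for the citation.
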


\begin{proof}[Proof of Lemma~\ref{lemma:pure-overlap}]
  Let~$Y=(\bra\varphi \otimes I_{\calY})\rho(\ket\varphi \otimes I_{\calY})$,
  and let~$A=\ket\varphi\bra\varphi\otimes I_{\calY}$.
  Then, it holds that
  \begin{align*}
    \sqrt{A}\,\rho\sqrt{A}
    &=
    A\rho A
    =
    \ket\varphi\bra\varphi \otimes Y,
    \\
    \Tr\rho A
    &=
    \bra\varphi(\Tr_{\calY}\rho)\ket\varphi.
  \end{align*}
  By Lemma~\ref{lemma:gentle-measurement}, it holds that
  \[
    \norm[\big]{\rho - \ket\varphi\bra\varphi \otimes Y}_1
    \le
    2\sqrt{1-\bra\varphi(\Tr_{\calY}\rho)\ket\varphi}.
  \]
  which implies that
  \[
    \norm{\Tr_{\calX}\rho - Y}_1
    \le
    2\sqrt{1-\bra\varphi(\Tr_{\calY}\rho)\ket\varphi}.
  \]
  Then we have that
  \begin{align*}
    &
    \norm[\big]{
      \rho
      -
      \ket\varphi\bra\varphi \otimes \Tr_{\calX}\rho
    }_1
    \\
    &\le
    \norm[\big]{
      \rho
      -
      \ket\varphi\bra\varphi \otimes Y
    }_1
    +
    \norm[\big]{
      \ket\varphi\bra\varphi \otimes \Tr_{\calX}\rho
      -
      \ket\varphi\bra\varphi \otimes Y
    }_1
    \\
    &=
    \norm[\big]{
      \rho
      -
      \ket\varphi\bra\varphi \otimes Y
    }_1
    +
    \norm{
      \Tr_{\calX}\rho
      -
      Y
    }_1
    \\
    &\le
    4\sqrt{1-\bra\varphi(\Tr_{\calY}\rho)\ket\varphi}.
    \qedhere
  \end{align*}
\end{proof}

\section{Additional results}
\label{sec:additional-results}

In this section we mention some additional results about quantum
interactive proof systems with unbounded error.

\subsection{\boldmath One-round quantum interactive proofs for $\PSPACE$ with a weak error bound}

\begin{theorem} \label{theorem:qipu2}
  It holds that $\PSPACE\subseteq\QIP(2,1,1-2^{-\poly})$.
\end{theorem}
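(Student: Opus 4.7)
The plan is to exhibit a QMA-style protocol for $\PSPACE$ with exponentially small but positive soundness gap by applying Kitaev's local Hamiltonian construction to a reversible exponential-time simulation of a $\PSPACE$ computation, and then to view this one-message protocol as a two-message one by prepending a trivial verifier message.

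Fix $L\in\PSPACE$ and take a deterministic Turing machine deciding~$L$ in space $s\in\poly$. By a standard reversible simulation, this computation can be implemented, within polynomial space, as a quantum circuit $U=V^T$ over Toffoli gates consisting of $T=2^{\poly(n)}$ applications of a fixed ``one-step'' unitary~$V$ on $\poly(n)$ qubits, whose designated output qubit is~$\ket{1}$ iff $x\in L$. Introduce a binary clock register of $\lceil\log(T+1)\rceil=\poly(n)$ qubits and form Kitaev's Hamiltonian $H=H_{\mathrm{init}}+\sum_{t=1}^{T}H_{\mathrm{prop},t}+H_{\mathrm{out}}$ in the usual way, with $m=O(T)$ positive semidefinite terms each of operator norm at most~$1$.

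The protocol is: the verifier's first message is a trivial placeholder; the prover replies with a state~$\rho$ on the state-and-clock registers; the verifier picks $i\in\{1,\ldots,m\}$ uniformly at random (using $\poly(n)$ random bits), computes the description of the $i$-th term in polynomial time---noting that every propagation term uses the same one-step unitary~$V$ and differs only in which clock value it inspects---and performs the standard Kitaev measurement that rejects with probability $\Tr(\rho H_i)$. For $x\in L$, the history state
\[
  \ket{\eta}=\frac{1}{\sqrt{T+1}}\sum_{t=0}^{T} (V^t\ket{x}\ket{0})\otimes\ket{t}_{\mathrm{clk}}
\]
satisfies $\bra{\eta}H\ket{\eta}=0$, giving perfect completeness. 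For $x\notin L$, Kitaev's geometric-lemma bound yields $\Tr(\rho H)\ge c/T^3$ for every $\rho$ and a universal constant~$c>0$, so the acceptance probability $1-\Tr(\rho H)/m$ is at most $1-c/(mT^3)=1-2^{-\poly(n)}$.

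The only step that requires any inspection is that Kitaev's $\Omega(1/T^3)$ lower bound on the minimum energy remains valid when $T$ is exponential rather than polynomial; this is immediate since the usual argument depends only on the spectrum of the propagation Hamiltonian, which is a $(T+1)$-dimensional quantum walk whose eigenvalues are unaffected by the choice of clock encoding and do not need $T$ to be polynomial. The remaining ingredients---reversible simulation within polynomial space, the standard controlled-$V$ basis change that implements each $H_{\mathrm{prop},t}$ measurement exactly over the Toffoli, Hadamard, $\pi/2$-phase-shift gate set, and the trivial promotion of a $\QIP(1)$ protocol to a $\QIP(2)$ protocol via a dummy first verifier message---are all routine, so there is no substantive obstacle.
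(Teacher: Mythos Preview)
Your approach is correct and takes a genuinely different route from the paper. The paper argues by chaining existing reductions: the $\PSPACE$-complete \problemname{Succinct Bipartiteness} problem admits a two-prover one-round XOR proof system with perfect completeness and an exponentially small gap, so $\PSPACE\subseteq\xorMIP_{1,1-2^{-\poly}}[2]$; this equals $\xorMIPstar_{1,1-2^{-\poly}}[2]$ by Cleve, H{\o}yer, Toner, and Watrous; and Wehner's construction then converts entangled two-prover XOR games into two-message single-prover quantum proofs. Your argument instead applies the Kitaev circuit-to-Hamiltonian construction directly to an exponentially long, polynomially wide reversible simulation of a $\PSPACE$ machine, using a binary clock so that the witness has only polynomially many qubits and each Hamiltonian term remains measurable by a polynomial-size circuit (since every propagation step uses the same one-step unitary~$V$). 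This is more self-contained and in fact proves the stronger statement $\PSPACE\subseteq\QIP(1,1,1-2^{-\poly})$, as your verifier-to-prover message is a dummy; combined with the paper's Theorem~\ref{theorem:upper-qma} this yields the exact characterization $\PSPACE=\QMA(1,{<}1)=\QIP(1,1,1-2^{-\poly})$. The paper's route, by contrast, makes essential use of the first message (Wehner's verifier sends halves of EPR pairs) and ties in with the paper's broader theme of relating multi-prover games to single-prover quantum interactive proofs.
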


\begin{proof}
  The \problemname{Succinct Bipartiteness} problem is the problem of
  deciding if an exponential-size graph, given in its succinct
  representation, is bipartite.
  It is known to be $\PSPACE$-complete~\cite{LozBal89WG}.
  It is straightforward to construct a two-prover one-round XOR
  interactive proof system with perfect completeness and an
  exponentially small gap for this problem.
  (We refer the reader to \cite{CleHoyTonWat04CCC,Wehner06STACS}
   for the definition of XOR interactive proof systems.)
  This proves the
  containment
  \[
    \PSPACE\subseteq\xorMIP_{1,1-2^{-\poly(n)}}[2].
  \]
  Theorem~5.10 of Cleve, H{\o}yer, Toner, and
  Watrous~{\cite{CleHoyTonWat04CCC}} implies that
  \[
    \xorMIP_{1,1-2^{-\poly(n)}}[2]=\xorMIPstar_{1,1-2^{-\poly(n)}}[2],
  \]
  and the construction of Wehner~\cite{Wehner06STACS} implies
  \[
    \xorMIPstar_{1,1-2^{-\poly(n)}}[2]\subseteq
    \QIP(2,1,1-2^{-\poly}).
  \]
  We obtain the theorem by chaining these inclusions.
\end{proof}

\subsection{Upper bounds}

One may also consider the power of quantum interactive proof systems
when acceptance is defined by a sharp threshold value.
That is, for any choice of functions $m\in\poly$ and $a\colon\NN
\rightarrow (0,1]$, we may consider the class $\QIP(m,a,<\!a)$,
defined as the class of promise problems $A = (A_{\yes},A_{\no})$
having a quantum interactive proof system with $m(\abs{x})$ messages
that accepts with probability at least $a(\abs{x})$ on inputs $x\in
A_{\yes}$, and with probability strictly smaller than $a(\abs{x})$
on all inputs $x\in A_{\no}$.
The notation $\QMA(1,<\!1)$ is shorthand for $\QIP(1,1,<\!1)$.
The following two theorems concerning these classes are proved.

In this section, the following mild assumptions are made on the gate set:
\begin{itemize}
\item
  The gate set consists of a finite number of gates.
\item
  The amplitudes of each gate in the gate set are algebraic numbers.
\end{itemize}
Without the second restriction,
even $\BQP$ would contain some undecidable languages;
see Theorem~5.1 of Adleman, Demarrais, and Huang~\cite{AdlDemHua97SICOMP}.

\subsubsection{Upper bound on $\QIP(\poly,a,<\!a)$}

\begin{theorem} \label{theorem:upper-qip}
  For any polynomial-time computable function $a:\NN\rightarrow(0,1]$,
  it holds that
  \[
    \QIP(\poly,a,<\!a)\subseteq\EXPSPACE.
  \]
\end{theorem}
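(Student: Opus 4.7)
The plan is to recognize that any fixed quantum interactive proof system of this form gives rise to a polynomial optimization problem of exponential size over the reals: the entries of the prover's unitaries are real unknowns, unitarity amounts to a system of quadratic equations, and the acceptance probability is a polynomial in these unknowns together with the algebraic entries of the verifier's gates. The decision question ``does there exist a prover strategy with acceptance probability at least~$a(\abs{x})$?'' is thus an existential sentence in the first-order theory of the reals. The classical decision procedure of Canny for this theory runs in space polynomial in the sentence length; by arranging the sentence to have length~$2^{\poly(\abs{x})}$, we will obtain an $\EXPSPACE$ algorithm.

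Fix~$x$ of length~$n$, compute the verifier's polynomial-size quantum circuit, and let $N=2^{\poly(n)}$ upper-bound the dimension of the combined message and private registers. A general prover strategy in an $m$-message protocol is specified by $O(m)$ unitaries on Hilbert spaces of dimension at most~$N$, each represented by~$O(N^2)$ real variables (real and imaginary parts of matrix entries) subject to quadratic constraints $U^\ast U=I$. Rather than multiplying out the verifier's entire round unitary---an $N\times N$ matrix for which we have no concise closed-form---we keep the verifier in its gate-decomposed form. We introduce fresh real variables for the $2N$ amplitudes of the pure state after each individual gate, and add a linear constraint declaring this new state to equal the appropriate gate (verifier gate or prover unitary) applied to the previous state. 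Each such constraint has only $O(1)$ algebraic coefficients per component when the gate is a verifier gate, and is bilinear in unknown matrix entries and previous amplitudes when it is a prover gate. Algebraic amplitudes from the finite gate set are handled once and for all by introducing a constant number of existentially quantified reals constrained by their fixed rational-coefficient minimal polynomials. Finally we assert that the acceptance probability read off from the final state is at least~$a(n)$, a rational number writable in $\poly(n)$ bits.

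The resulting sentence is purely existential, has $2^{\poly(n)}$ real variables and $2^{\poly(n)}$ polynomial constraints, each of degree at most two, with coefficients of $\poly(n)$ bit-length. Its total description length is therefore $2^{\poly(n)}$, and it is true if and only if some prover strategy attains acceptance probability at least $a(n)$. By Canny's theorem the existential theory of the reals is decidable in space polynomial in the sentence length, so this procedure uses $\poly(2^{\poly(n)}) = 2^{\poly(n)}$ space, placing the problem in $\EXPSPACE$. The exponential-length sentence need not be materialized all at once: Canny's algorithm queries individual symbols on demand, and each symbol can be supplied by a polynomial-space subroutine that reads off the uniform gate-level description of the verifier's circuit.

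The main technical obstacle is keeping the sentence length at $2^{\poly(n)}$, and in particular avoiding any step that would require the verifier's round unitary as a single $N\times N$ matrix of algebraic numbers---those entries are high-degree algebraic over the base amplitudes and admit no obvious short joint description. The gate-by-gate expansion with intermediate state variables sidesteps this issue, since only the constant-size gate matrices (carrying only $O(1)$ algebraic constants) appear as literals, while the exponentially many state amplitudes are ordinary existentially quantified variables linked by sparse low-degree polynomial constraints.
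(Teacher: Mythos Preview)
Your argument is correct and lands in the same place as the paper's---Canny's $\PSPACE$ decision procedure for the existential theory of the reals, applied to a sentence of length $2^{\poly(n)}$---but the route is genuinely different. The paper does not encode prover unitaries and gate-by-gate state evolution directly; instead it invokes the semidefinite programming characterization of the optimal acceptance probability due to Gutoski and Watrous, obtaining an exponential-size SDP with algebraic coefficients, and then reduces \emph{that} feasibility question to the existential theory of the reals. Your direct encoding is more self-contained, needing no outside structural result about $\QIP$ beyond the standard bound on the prover's private workspace dimension (which you assume without comment, and which the SDP formulation absorbs automatically). The paper's detour, on the other hand, isolates \problemname{Semidefinite Feasibility with Algebraic Coefficients} as an intermediate problem and records its $\PSPACE$ membership as a corollary of independent interest. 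One small point on your encoding: constraining the auxiliary variables for the algebraic gate amplitudes by their minimal polynomials alone does not select the correct root; you also need the isolating rational interval constraints $a<x<b$, as in the paper's representation $(f(X),a,b)$.
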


As stated in the introduction,
Gutoski and Watrous~\cite{GutWat07STOC} give a semidefinite program
representing the optimal acceptance probability
of a given quantum interactive proof system.
When applied to the class~$\QIP(\poly,a,<\!a)$, with our relaxed
assumptions on the gate set, this transformation results in a
semidefinite program of exponential size with algebraic coefficients.
The remaining task is to decide whether this semidefinite program
has the optimal value at least~$a$ or less than~$a$.
This task can be formulated as an exponential-size instance
of the \problemname{Semidefinite Feasibility with Algebraic
Coefficients} problem.

\problemname{Semidefinite Feasibility with Algebraic Coefficients}
is a problem based on semidefinite programming.
Let $\QQ$, $\RR$, and $\bar\QQ\cap\RR$
be the fields of rational numbers, real numbers, and algebraic real numbers,
respectively.
Each element $\alpha$ of $\bar\QQ\cap\RR$ can be encoded
as a triple $(f(X),a,b)$ of
the minimum polynomial $f(X)$ of $\alpha$ over $\QQ$
and $a,b\in\QQ$ with $a<\alpha<b$
such that $\alpha$ is the only root of $f(X)$ between $a$ and $b$.
(See Section~10.2 of Basu, Pollack, and Roy~\cite{BasPolRoy03}.)
\begin{center}
  \begin{minipage}{6in}
    \textbf{\problemname{Semidefinite Feasibility with Algebraic
        Coefficients}}\\[2mm]
    \begin{tabular}{@{}lp{5in}@{}}
    Instance: &
    Integers $n,d>0$,
    $m$ algebraic real matrices $A_1,\dots,A_m$ of size $d\times d$,
    and $m$ algebraic real numbers $b_1,\dots,b_m$.\\[2mm]
    Question: &
    Does there exist a $d\times d$ real matrix $X\succeq0$
    such that $\Tr A_i X=b_i$ for all $i$?
    \end{tabular}
  \end{minipage}
\end{center}
The complexity of
\problemname{Semidefinite Feasibility with Algebraic Coefficients} is
not known.
(See Ramana~\cite{Ramana97MP} for related results.)
Although there exist polynomial-time algorithms for semidefinite
programming that compute an approximate solution to an arbitrary
precision, they cannot be applied in a straightforward way to the
\problemname{Semidefinite Feasibility with Algebraic Coefficients}
problem.
We point out that the problem is in $\PSPACE$ by using the following
result.

\begin{theorem}[Canny~\cite{Canny88STOC}]
  \label{theorem:etr}
  The problem \problemname{Existential Theory of the Reals} is
  in~$\PSPACE$.
  That is, given a quantifier-free Boolean formula~$F(x_1,\dots,x_k)$
  with atomic predicates of the forms $p(x_1,\dots,x_k)=0$ and
  $p(x_1,\dots,x_k)>0$, where~$p$ is a polynomial with integer
  coefficients given as a list of coefficients in binary notation,
  it is decidable in space polynomial in the length of the formula~$F$
  whether there exists~$(x_1,\dots,x_k)\in\RR^k$
  that satisfies~$F$.
\end{theorem}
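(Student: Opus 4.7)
The plan is to reduce the existential theory of the reals to a finite search over a set of \emph{sample points} that meets every connected component of the semi-algebraic set defined by~$F$, and then argue that sample points can be enumerated and the truth of~$F$ at each one checked, all in polynomial space.

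First I would normalize $F$ using standard manipulations (for example, the Rabinowitsch trick $y\cdot p-1=0$ in place of $p\neq 0$), reducing to a polynomial system whose real solution set is nonempty iff $F$ is satisfiable. I would then apply the \emph{critical points method}: after a generic infinitesimal perturbation that puts the system in general position, every connected component of the real variety contains a critical point of a suitable coordinate projection, and these critical points are the common zeros of an auxiliary polynomial system obtained by adjoining Jacobian-minor equations. This auxiliary system is zero-dimensional, and by standard B\'ezout-type bounds its solutions are algebraic of degree $D^{O(k)}$ (where $D$ bounds the input degrees) with minimal polynomials of polynomial bit-length. Eliminating variables via resultants produces a univariate polynomial $q(t)\in\mathbb{Z}[t]$ of exponential degree but polynomial-length coefficients, each coefficient being a determinant of a polynomial-size matrix that is computable in~$\PSPACE$.

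Next, I would represent each sample point by a Thom encoding (its minimal polynomial together with the sign pattern of derivatives), use Sturm and subresultant sequences to isolate the real roots of $q$ and to evaluate the sign of every input polynomial at each candidate, and accept iff some enumerated sample point satisfies~$F$. The relevant univariate sign-determination and linear algebra admit $\NC$ algorithms (Ben-Or--Kozen--Reif, Neff--Reif), and hence are in~$\PSPACE$; since only coefficient descriptions (not fully expanded polynomials) need to be stored, the overall space usage remains polynomial in~$\abs{F}$.

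The main obstacle I expect is the second step: the critical points argument requires a careful genericity analysis to guarantee zero-dimensionality of the auxiliary system and that every connected component of the original solution set is indeed hit by a sample point. Canny's original proof threads this needle via the more elaborate \emph{roadmap} machinery; a cleaner modern route uses critical points of a generic linear projection after infinitesimal perturbation, but either way the delicate part is the perturbation analysis that keeps the sample set complete without inflating the polynomial-space bound when the intermediate objects (resultants, Thom encodings, Sturm sequences) have exponential size.
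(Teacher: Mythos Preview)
The paper does not prove this theorem at all: it is stated with attribution to Canny~\cite{Canny88STOC} and used as a black box to derive the subsequent corollary on \problemname{Semidefinite Feasibility with Algebraic Coefficients}. So there is no ``paper's own proof'' to compare your proposal against.

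That said, your sketch is a reasonable outline of the ideas behind Canny's result and its modern descendants (critical points / roadmap, zero-dimensional auxiliary systems, Thom encodings, univariate sign determination in $\NC$). If you were actually asked to supply a proof here, you would be reproducing a substantial piece of real algebraic geometry that the authors deliberately imported by citation; for the purposes of this paper the appropriate ``proof'' is simply the reference, and your effort would be better spent on the parts of the paper that are original.
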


\begin{corollary}
  The problem \problemname{Semidefinite Feasibility with Algebraic Coefficients}
  is in $\PSPACE$.
\end{corollary}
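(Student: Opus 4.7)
The plan is to reduce \problemname{Semidefinite Feasibility with Algebraic Coefficients} to \problemname{Existential Theory of the Reals} and then apply Theorem~\ref{theorem:etr}. The goal is to construct, from an instance $(A_1,\dots,A_m,b_1,\dots,b_m)$ with algebraic entries, a single existential sentence $F$ whose length is polynomial in the bit-length of the input, whose atomic predicates are polynomial (in)equalities with integer coefficients, and which is satisfiable over $\RR$ if and only if the SDP is feasible.

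First, I would introduce existentially quantified real variables $x_{ij}$ for the entries of a symmetric matrix $X$. To avoid an exponential blow-up from enumerating principal minors, I would encode the constraint $X \succeq 0$ via the factorization characterization: a real symmetric $X$ is positive semidefinite if and only if $X = YY^\trans$ for some real $d\times d$ matrix $Y$. Introducing $d^2$ fresh existential variables $y_{ij}$ and adding the $d(d+1)/2$ polynomial equations $x_{ij} = \sum_{k} y_{ik} y_{jk}$ captures the PSD constraint with only polynomially many variables and equations of degree at most two. The trace equations $\Tr(A_i X) = b_i$ then become linear equations in the $x_{ij}$ whose coefficients are the algebraic entries of $A_i$ and whose right-hand sides are the algebraic numbers $b_i$.

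Second, I would handle the algebraic coefficients. Each algebraic number $\alpha \in \bar\QQ \cap \RR$ that appears in the input is specified by a triple $(f(X), a, b)$ with $f \in \QQ[X]$ and $a, b \in \QQ$ isolating $\alpha$ as the unique root of $f$ in $(a,b)$. For each such $\alpha$ I would introduce a fresh existentially quantified variable $z_\alpha$ together with the atomic predicates $f(z_\alpha) = 0$, $z_\alpha - a > 0$, and $b - z_\alpha > 0$; clearing denominators in $f$, $a$, and $b$ produces equivalent predicates with integer coefficients. Every occurrence of $\alpha$ in the trace equations is then replaced by $z_\alpha$, so that all atomic predicates in the resulting sentence are polynomial (in)equalities with integer coefficients, as required by Theorem~\ref{theorem:etr}.

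Finally, I would verify the size bound: the number of variables is $O(d^2)$ from $X$ and $Y$ plus one per algebraic coefficient, the number of atomic predicates is $O(d^2 + m)$, and each predicate has degree at most two in the matrix variables or at most $\deg f$ in a single algebraic-coefficient variable. Hence the total length of $F$ is polynomial in the bit-length of the input, and Theorem~\ref{theorem:etr} yields the desired $\PSPACE$ decision procedure. The only real obstacle is expressing positive semidefiniteness within an \emph{existential} first-order formula of polynomial size, since the naive Sylvester-style characterization has exponentially many principal minors and the quadratic-form characterization requires a universal quantifier; the factorization $X = YY^\trans$ cleanly sidesteps both difficulties.
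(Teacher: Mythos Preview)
Your proposal is correct and follows essentially the same approach as the paper: encode each algebraic coefficient $\alpha$ given by $(f,a,b)$ as a fresh existential variable constrained by $f(z_\alpha)=0 \wedge z_\alpha-a>0 \wedge b-z_\alpha>0$, express $X\succeq0$ via the factorization $X=YY^{\trans}$ (the paper writes $X=M^{\trans}M$), and then invoke Theorem~\ref{theorem:etr}. Your write-up is simply more explicit about the size bound and about why the factorization trick is needed to avoid exponentially many minor constraints or a universal quantifier.
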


\begin{proof}
  Note that an algebraic number
  encoded as~$(f(X),a,b)$
  can be represented as a variable~$x$
  constrained as~$f(x)=0\wedge x-a>0\wedge b-x>0$.
  By using this, we can write down each linear constraint~$\Tr A_i X=b_i$
  in terms of the variables representing the~$d^2$ coordinates of~$X$.
  Moreover, the semidefinite constraint~$X\succeq0$ can be written
  as~$\exists M.X=M^{\trans}M$,
  and therefore can be written as polynomial constraints
  on the coordinates of~$X$.
\end{proof}

By combining the semidefinite programming formulation of~\cite{GutWat07STOC}
and the polynomial-space algorithm
for \problemname{Semidefinite Feasibility with Algebraic Coefficients},
we obtain Theorem~\ref{theorem:upper-qip}.

\subsubsection{Upper bound on $\QMA(1,<\!1)$}

\begin{theorem} \label{theorem:upper-qma}
  It holds that $\QMA(1,<\!1)\subseteq\PSPACE$.
\end{theorem}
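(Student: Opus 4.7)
The plan is to reduce the decision problem to a singularity test on an implicitly-defined, exponential-size positive semidefinite matrix, and to solve that test in polynomial space by simulating a parallel determinant algorithm.

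For a $\QMA(1,<\!1)$ verifier $V$ on $m$ witness qubits and $n$ ancilla qubits with accept projector $\Pi$, the maximum acceptance probability over all witness states equals $\lambda_{\max}(M)$, where
\[
M = (I\otimes\bra{0^n})\, V^{*}\Pi V\,(I\otimes\ket{0^n})
\]
is a $2^m\times 2^m$ positive semidefinite matrix with $\|M\|\le 1$. Hence $x\in A_{\yes}$ iff $\lambda_{\max}(M)=1$, i.e., iff $I-M$ is singular. Under the assumptions made on the gate set, the entries of $M$ all lie in a fixed algebraic number field $K$ of polynomial degree over $\QQ$ (generated by the gate amplitudes), and each entry can be computed as a polynomial-size algebraic-number representation (defining polynomial together with isolating interval, as recalled before Theorem~\ref{theorem:etr}) in polynomial space, by summing path amplitudes over the $2^n$ computational-basis choices for the ancilla.

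The main step is to decide singularity of the $2^m\times 2^m$ matrix $I-M$ in polynomial space, given only such oracle-style polynomial-space access to its entries. I would simulate a division-free $\NC^2$ determinant algorithm, such as Berkowitz's, on this implicit matrix: the algorithm is realized by an arithmetic circuit of depth $O(\log^2 N)=O(m^2)$ and size polynomial in $N=2^m$, whose output can be evaluated gate by gate via recursive depth-first traversal in total space $O(m^2)$ plus the polynomial space used to service each leaf (entry) query. Testing whether the resulting value equals zero then decides the singularity of $I-M$, and hence membership in $A_{\yes}$.

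The main obstacle I expect is verifying that nothing blows up in this simulation, since the determinant itself can have exponential bit complexity when written out as a rational expression. The resolution is that every intermediate value of Berkowitz's computation lies in the fixed polynomial-degree field~$K$, and so admits a polynomial-size algebraic-number representation on which the required field operations and zero-tests run in polynomial time. Careful bookkeeping of these representations---so that the recursive evaluation of the $\NC^2$ circuit never holds more than polynomially many polynomial-size field elements at once---completes the argument and yields $\QMA(1,<\!1)\subseteq\PSPACE$.
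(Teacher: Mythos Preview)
Your overall strategy matches the paper's---reduce to testing singularity of $I-M$ for an implicit $2^m\times 2^m$ matrix over a fixed number field and simulate a parallel determinant algorithm (the paper uses Csanky and the representation $K\cong\QQ[t]/(f(t))$ via the primitive element theorem; Berkowitz works equally well)---but your resolution of the bit-complexity obstacle is where the argument breaks. You claim that because every intermediate value lies in the fixed-degree field $K$, it ``admits a polynomial-size algebraic-number representation.'' Membership in a degree-$d$ extension of $\QQ$ bounds only the \emph{degree} of the minimal polynomial (equivalently, the number of rational coordinates), not the \emph{bit-length} of those coordinates. Already for $K=\QQ$, the intermediate values in Berkowitz's circuit on an $N\times N$ matrix---and $\det(I-M)$ itself---have bit-length $\mathrm{poly}(N)$, which for $N=2^m$ is exponential in the input length. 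A depth-first evaluation of the arithmetic circuit that keeps even $O(m^2)$ such values on its stack therefore uses exponential space, not polynomial.

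The fix the paper applies (tersely) is to drop to the Boolean level. Because all intermediate arithmetic values have bit-size $\mathrm{poly}(N)$ and each field operation on $\mathrm{poly}(N)$-bit inputs is itself in $\NC$, the entire singularity test unfolds into a uniform Boolean circuit of size $\mathrm{poly}(N)=2^{\poly(n)}$ and depth $\mathrm{polylog}(N)=\poly(n)$. Such a circuit can be evaluated output-bit by output-bit in space $O(\text{depth}\cdot\log(\text{size}))=\poly(n)$, with each input bit (a bit of an entry of $M$) supplied by the $\PSPACE$ subroutine you already described; iterating over the $\mathrm{poly}(N)$ output bits to check that all are zero also stays in $\PSPACE$. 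Replace your arithmetic-level recursion by this Boolean-level simulation and the proof is complete.
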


\begin{proof}
  Let $L\in\QMA(1,<\!1)$.
  The same technique as the proof of $\QMA\subseteq\PP$
  by Marriott and Watrous~\cite{MarWat05CC}
  reduces $L$ to a problem of deciding whether or not an implicitly
  given exponential-sized matrix $A$ has an eigenvalue~$1$,
  or equivalently $I-A$ is singular.

  The entries of $A$ are in a field $F$ that depends on the language
  $L$ as follows.
  Let $\alpha_1,\dots,\alpha_u\in\CC$
  be the distinct numbers that appear as entries in the natural
  representations of the gates in the gate set used by the verifier
  in the system for the language $L$.
  Let $F=\QQ(\alpha_1,\dots,\alpha_u)$
  be the field generated by the adjunction of $\alpha_1,\dots,\alpha_u$
  to the field $\QQ$,
  i.e.\ the smallest field containing all the rational numbers
  and $\alpha_1,\dots,\alpha_u$.
  Because $\alpha_1,\dots,\alpha_u$ are algebraic, $F$ is a finite
  extension of the field $\QQ$.
  By the primitive element theorem (see e.g.\ Problem~7.5 of \cite{Lorenz06}),
  there exists an algebraic number $\alpha\in F$
  such that $F=\QQ(\alpha)$.
  Let $f(t)$ be the minimal polynomial of $\alpha$ over $\QQ$
  and $d$ be the degree of $f(t)$.
  The field $F$ is isomorphic to the quotient field $\QQ[t]/(f(t))$,
  by which we identify $F$ with the set of polynomials over $\QQ$
  of degree at most $d-1$.
  Using this representation,
  addition, subtraction, multiplication, division, and equality testing
  of the numbers in $F$ can be performed in $\NC$.

  Using this representation,
  each entry of $A$ can be computed in $\PSPACE$.
  Csanky's algorithm~\cite{Csanky76SICOMP} can then be used to
  determine whether $I-A$ is singular or not in $\PSPACE$.
\end{proof}

\section{Open problems}

We conclude with a short list of open problems related to quantum
interactive proof systems with an unbounded error.
\begin{itemize}
\item
  Is $\EXP\subseteq\QIP(2,1,<\!1)$?
\item
  We have~$\PSPACE\subseteq\QIP(\poly,1,1-2^{-\poly})\subseteq\EXP$.
  Where does~$\QIP(\poly,1,1-2^{-\poly})$ lie?
  One may try to prove~$\QIP(\poly,1,1-2^{-\poly})=\PSPACE$
  by improving the dependence of the parallel time of an approximation
  algorithm for semidefinite programming on the error parameter.
  Note, however, that this is open even for the special case of
  positive linear programming~\cite{TreXha98PPL}.
\item
  Is it possible to improve our upper bound of $\EXPSPACE$ on
  $\QIP(\poly,a,<\!a)$?
  In particular, is it possible to avoid resorting to the exact feasibility
  of a semidefinite program?
  Or does the succinct version of the semidefinite feasibility problem
  belong to~$\QIP(\poly,a,<\!a)$?
  How small can the gap in acceptance probability
  between the completeness case and the soundness case be
  in a quantum interactive proof system?
\item
  Does the containment~$\QMA\subseteq\PP$~\cite{MarWat05CC}
  extend to the unbounded-error case?
  Our upper bound of $\PSPACE$ may not hold if perfect completeness is
  not assumed.
\end{itemize}

\section*{Acknowledgments}

We thank the anonymous reviewers of an earlier version of this paper
for helpful comments.
Tsuyoshi Ito acknowledges support from NSERC, CIFAR, QuantumWorks,
MITACS, CFI, and ORF\@.
Hirotada Kobayashi is partially supported by the Grant-in-Aid for
Scientific Research~(B) No.~21300002 of the Japan Society for the
Promotion of Science.
John Watrous acknowledges support from NSERC, CIFAR, and MITACS\@.

\bibliography{unbounded-qip}

\end{document}